\definecolor{mygreen}{RGB}{28,172,0} 
\definecolor{mylilas}{RGB}{170,55,241}
\definecolor{MyDarkGreen}{rgb}{0.0,0.4,0.0}
\definecolor{Blue}{rgb}{0.0,0,0.4}
\tiny\color{Blue},          
\newtheorem{theorem}{Theorem}[section]
\newtheorem{definition}[theorem]{Definition}
\newtheorem{proposition}[theorem]{Proposition}
\newtheorem{assumption}[theorem]{Assumption}
\newtheorem{remark}[theorem]{Remark}
\newcommand{\thetab}{\bm{\theta}}
\newcommand{\rb}{\bm{r}}
\newcommand{\R}{\mathbb{R}}
\newcommand{\p}{\mathbb{P}}
\newcommand{\Q}{\mathbb{Q}}
\title{{\huge Locally tail-scale invariant scoring rules for evaluation of extreme value forecasts}}
\author[1]{Helga Kristín Ólafsdóttir}
\author[1]{Holger Rootzén}
\author[2]{David Bolin}
\affil[1]{Department of Mathematical Sciences, Chalmers University of Technology and University of Gothenburg, Gothenburg, Sweden}
\affil[2]{Computer, Electrical and Mathematical Sciences and Engineering Division, King Abdullah University of Science and Technology, Thuwal, Saudi Arabia}
\date{}
\begin{document}
\maketitle

\begin{abstract}


Statistical analysis of extremes can be used to predict the probability of future extreme events, such as large rainfalls or devastating windstorms. The quality of these forecasts can be measured through scoring rules. Locally scale invariant scoring rules give equal importance to the forecasts at different locations regardless of differences in the prediction uncertainty. This is a useful feature when computing average scores but can be an unnecessarily strict requirement when mostly concerned with extremes. We propose the concept of local weight-scale invariance, describing scoring rules fulfilling local scale invariance in a certain region of interest, and as a special case local tail-scale invariance, for large events. Moreover, a new version of the weighted Continuous Ranked Probability score (wCRPS) called the scaled wCRPS (swCRPS) that possesses this property is developed and studied. The score is a suitable alternative for scoring extreme value models over areas with varying scale of extreme events, and we derive explicit formulas of the score for the Generalised Extreme Value distribution. The scoring rules are compared through simulation, and their usage is illustrated in modelling of extreme water levels, annual maximum rainfalls, and in an application to non-extreme forecast for the prediction of air pollution.

\end{abstract}

\section{Introduction}

The main aim for statistical analysis of extremes in fields such as hydrology, meteorology, climate science, and finance is  forecasting, or prediction, of the risk of occurrence of future dangerous extreme events, such as huge rainfall events or very large fluctuations of prices of a financial portfolio. 
Typically, the forecast is in terms of an estimated distribution, for example the estimated distribution of the size of the largest event which will occur during some specified period of time. In this case, the forecast is said to be probabilistic. The forecast distribution is then often presented to decision makers in a reduced form, perhaps as a predicted probability of exceeding one, or several, high thresholds. In many cases, there are several competing forecast distributions, and a standard way of choosing between these competing forecasts is to compute goodness of fit measures of the distributions based on the existing data, and then use the distribution that fits best. This approach is, for example,  broadly used in hydrology \citep{Cugerone2015JohnsonDistribution}. However, an alternative gaining increasing interest is to base the selection on proper scoring rules \citep{Zamo2018}. 

A scoring rule is a functional $S(\p,y)$ that takes in an outcome $y$ and a probabilistic forecast $\p$, the predictive distribution for the outcome. If the outcome $Y\sim\Q$ is a random variable with distribution $\Q$, the score $S(\p,Y)$ is also a random variable, with expected value denoted by $S(\p,\Q)$. To be able to make an earnest prediction, it is preferred that the expected score is maximised under the correct model, i.e., that $S(\Q,\Q)\geq S(\p,\Q)$, for all predictive distributions $\p$. In this case the score is said to be proper \citep{Gneiting2007}. If strict inequality holds for all $\p \neq \Q$, the scoring rule is  strictly proper.

An example of a commonly used proper scoring rule is the Continuous Ranked Probability Score (CRPS), which is defined as
\begin{equation*}
CRPS(\p,y)=-\int_{\R} (F(x)-1(x\geq y))^2 dx, 
\end{equation*}
where $F$ is the cumulative distribution function of the forecast distribution $\p$, and $1(\cdot)$ is the indicator function. Another popular example is the logarithmic score $LS(\p,y) =\log f(y)$, where $f$ is the density function of $\p$. Both of these scores are widely used in meteorology, climate science, and finance ~\citep{Opschoor2017CombiningRules,Ingebrigtsen2015EstimationField,Haiden2019EvaluationUpgrade}.
The choice of scoring rule allows an evaluator to emphasise the desired features of the prediction. For extreme events, one typically is interested in the behaviour of the prediction in the tail of the distribution. One way of doing this is to simply multiply the proper score $S_0$ with a weight function $w$ and form a new score $S(\p,y)=w(y)S_0(\p,y)$. However, this score is not proper unless $w(y)$ is constant~\citep{Gneiting2011ComparingRules} and alternative ways of weighting should be used. Several such proper weighted scores exist~\citep[see e.g.][]{Diks2011,Todter2012GeneralizationDecomposition,Gneiting2011ComparingRules}, with a popular one being the threshold weighted CRPS (wCRPS) by \citet{Gneiting2011ComparingRules}, where a weight function is included in the kernel of the CRPS,
\begin{equation*}
    wCRPS(\p,y)=-\int_{\R} w(x)(F(x)-1(x\geq y))^2dx.
\end{equation*}
Here $w(x)$ is a non-negative function that can be chosen to put more emphasis on large values. However, what weight function to use is not obvious. The forecaster's dilemma, as presented by \citet{Lerch2017}, describes the problem that forecasters can be encouraged to exaggerate their predictions, since the effect of missing a calamity is worse than wrongfully predicting one. To avoid this, the choice of the weight function should ideally be made by the user of the forecasts rather than by the forecaster. 

In the case of multiple observations, ${\bm y}=(y_1,,,, y_n)$, and a forecast $\p=(\p_1,...,\p_n)$ the joint score of the forecast is defined as the average score,
\begin{equation*}
    S(\p,{\bm y})=\frac{1}{n}\sum_{i=1}^n S(\p_i,y_i),
\end{equation*}
and this is used to inform the selection of prediction method.

As recently shown by \citet{Bolin2022LocalRulesCustom}, many common scoring rules, such as the CRPS, are ``scale dependent'' in the sense that they put higher importance on forecasts with higher prediction uncertainty in the case when the observations $y_i$ in the average score have varying predictability. This is, sometimes, an undesirable property. For example, if one wants to predict extreme rainfalls in some spatial area  where the scale of the events varies, infrastructure in the parts where the scale of events is small and in parts where the scale is large will have adapted to these differences, and accuracy is equally important throughout the area.  \citet{Bolin2022LocalRulesCustom} introduced the concept of ``local scale invariance'' to describe scoring rules which do not have this feature, and showed that the logarithmic score is an example of a score that is both strictly proper and locally scale invariant. However, use of the logarithmic score requires computation of the probability density function of the predictive distribution, which is not always easy to compute, or might not even exist.

Even though local scale invariance is an important property, it might be an unnecessarily strict requirement if one is mostly concerned with extremes. In this work, we therefore introduce a new concept of ``local weight-scale invariance'' to describe scoring rules which have the property of local scale invariance for a certain region of interest. As a special case, when the region of interest is large events, we obtain the concept of ``local tail-scale invariance''. We develop and study a new version of the wCRPS called the scaled wCRPS, or swCRPS, which is locally weight-scale invariant. The Generalised Extreme Value (GEV) distribution is of special interest in extreme value statistics, and we derive explicit formulas of swCRPS for the GEV distribution. These scores are analysed in terms of local scale invariance. We
compare the swCRPS and the censored likelihood score to the more common logarithmic score, the CRPS, and the wCRPS in a simulation study and in case studies where different models for water levels in the Great Lakes, extreme rainfall events in the Northeastern U.S., and air pollution in the Piemonte region in Italy are evaluated using the different scoring rules.

The article is structured as follows: Section \ref{sec:preliminaries} gives a background on scoring rules, weighting of scores,  and local scale independence of scores. In Section \ref{sec:scaled:weghted:scoring:rules}, scale independence of weighted scores is discussed. Section \ref{sec:results:simulation} contains simulation studies, and Section \ref{sec:results:casestudy} presents case studies. Finally, discussion and conclusions can be found in Section \ref{sec:conclusion}. Derivation of explicit formulas for swCRPS for the GEV distribution, a background on modelling and scoring of extremes,  and proofs of propositions
are included in the appendices of the article.
\section{Background}\label{sec:preliminaries}

In this section we provide more background information about the CRPS and its weighted variants, provide details about scale dependence of scoring rules, and introduce some commonly used models for annual flow and precipitation maxima. In the following, $E_{\p}[g(X)]$ denotes the expectation of $g(X)$ when $X\sim\p$ is a random variable with distribution $\p$, and $E_{\p,\Q}[g(X,Y)]$ denotes the expectation of $g(X,Y)$ for independent random variables $X\sim\p$ and $Y\sim \Q$.

\subsection{The CRPS and weighted scoring rules for extremes}
A different representation of CRPS,  obtained by \citet{Baringhaus2004}, is
\begin{equation}\label{eq:neg:crps}
    CRPS(\p,y)= \frac{1}{2}E_{\p,\p}[|X-X'|]-E_\p[|X-y|],
\end{equation}
where $X,X'$ are independent random variables with the same distribution $\p$ and finite first moment. This expression can be derived by noting that  $|X-Y|=\int_\R 1\{X\leq u\leq Y\}+1\{Y\leq u\leq X\} du$ in Eq. \eqref{eq:neg:crps}, using Fubini's theorem to compute $E|X-Y|$ in terms of the distribution functions $F$ and $G$ of $X$ and $Y$ respectively, and finally setting $G(x)=1\{x\geq y\}$. This representation is useful in cases where a closed form expression for $F$ does not exist, which might be the case when using ensemble forecasts in weather prediction, or when evaluating the predictive performance of complicated hierarchical models. In these cases, the expected values can be approximated through Monte Carlo simulation from the predictive distribution. \citet{Zamo2018} compared different estimators of CRPS and made a recommendation on how to choose the most accurate one based on the ensemble type available.

Similarly for the wCRPS, one can show that
\begin{equation}\label{eq:wCRPS}
    wCRPS(\p,y)= \frac{1}{2}E_{\p,\p}[g_w(X,X')]-E_\p[g_w(X,y)],
\end{equation}
where $g_w(x,x'):=\left|\int_{x'}^x w(t)dt\right|$, provided the expectations above are finite. If one is interested only in the upper-tail behaviour, one might choose the indicator weight function
\begin{equation}\label{eq:qwf}
    w_u(x):=1(x\geq u),
\end{equation}
for some real $u$. 
Instead of choosing the weight function as the indicator function $w_u$, one might want to consider some other variation to incorporate the whole dataset instead of just the extremes. For example, \citet{THORARINSDOTTIR2018155} suggested $w_2(y)=1+1w_u(y)$ and $w_3(y)=1+w_u(y)u$ as alternatives,  
where $u$ was chosen as the $97.5\%$ observed quantile. In an example in Subsection~6.3. of \citet{THORARINSDOTTIR2018155}, the best choice was to use $w_2$ as a weight function in the wCRPS. All these weight functions can be written as $w(y) =a+b w_u(y)$,
where $a,b,u$ are constants, and result in a weighted sum of the wCRPS with threshold weight function and the CRPS.

\citet{TAILLARDAT2022} and \citet{Brehmer2019} show that expected scores are not suitable for scoring of ``max-functionals'', functionals which  are determined by the behaviour of the distribution at $+\infty$, with an example being the so-called Extreme Value Index.  They then argue that this means that use of expected scores may be unsuitable for extremes. However, standard practical use of extreme value statistics is not only concerned with the prediction of tail functionals. If  one is interested in predicting maxima, e.g. yearly maxima of daily rainfall, one obtains a sample of earlier yearly maxima, fits some distribution, perhaps a GEV distribution, to this sample and uses the fitted distribution for prediction. This is completely parallel to fitting a normal distribution to a sample and using the fitted distribution for prediction, and understanding and use of expected scoring rules are the same for the two cases. If one instead is interested in prediction of occurrences and sizes of excesses of a high threshold, one obtains a sample of such excesses and uses the sample for prediction. Again this is completely parallel to using a fitted normal distribution for prediction. Thus, we believe that there is still a value in using expected scoring rules for extremes as long as one is not solely interested in max-functionals such as the extreme value index.

\citet{TAILLARDAT2022} also suggested to treat observed scores as random variables, and to use qq- and pp-plots and Cramer-von Mises statistics to compare scores. This seems like a quite useful idea, both for scoring extremes and for scoring non-extreme values, although it is not entirely clear how to use this to obtain generally applicable rules for ranking models.

\citet{TAILLARDAT2022} and \citet{Brehmer2019} discuss scoring of extreme value tail functionals, i.e.,  functionals that only depend on the behaviour of the distribution at $+\infty$, such as the so-called extreme value index. For a number of examples of scoring rules, they show that to any “true” distribution, there are alternative distributions for which the scoring rules have expectations which are arbitrarily close to the expected score of the true distribution, but dramatically different tail functionals. The intuition, and sometimes the proofs of these results, use the following model for the alternatives,
\[F(x)=(1-\varepsilon) F_1 (x)+\varepsilon F_2 (x),\]
where $\varepsilon$ is a positive number which can be chosen arbitrarily small, $F_1$ is the true distribution function, and $F_2$ has a heavier tail. Then the tail functional of $F$, e.g., the extreme value index, is the same as the that of $F_2$, since $F_2$ dominates at $+\infty$ for any $\varepsilon>0$. But still it is possible to make the expected score for $F$ and $F_1$ arbitrarily close by making $\varepsilon$ small enough.

This fact is used as an argument against using average scoring rules for extremes. However, it should be noted that the same argument can be made against using scoring rules in non-extreme contexts. For example, if one is interested in the variance of the true distribution $F_1$, then for arbitrarily small $\varepsilon$ one can choose $F_2$ so that $F$ has any variance, but the scoring rules give arbitrarily close expected values for $F$ and $F_1$. The only way to avoid this is by mathematical assumption, which for extremes would be to rely on the standard asymptotic theory for how the tails are connected with the observed data. More generally, any scoring rule can only address the parts of the sample space which is explored by the data, and in particular if one has $n$ data points, either any values, or just large values, and $n\varepsilon$ is small, then it is not possible to use the data to distinguish between the models $F$ and $F_1$. This applies to both extremes and ordinary values. 

\citet{TAILLARDAT2022} also suggested to treat observed scores as random variables, and to use qq- and pp-plots and Cramer-von Mises statistics to compare scores. This seems like a useful idea, both for scoring extremes and non-extreme values, although it is not entirely clear how to use it to obtain a generally applicable rule for ranking models.

\subsection{Local scale invariance and kernel scores}
As mentioned above, the CPRS is a scale dependent scoring rule.
\citet{Bolin2022LocalRulesCustom} argued that this can sometimes lead to wrong conclusions, e.g.,  if one uses predictions from  different spatial locations such as in the extreme rainfall analysis discussed in the introduction.
In this section, we make this notion of scale dependence more specific and discuss kernel scores.

For a given probability measure $\Q$, let $\Q_{\thetab}$ for $\thetab=(\mu,\sigma)\in\R\times\R^+$ denote a  location-scale transform of this measure. That is, if $Z \sim \Q$, then $\mu + \sigma Z \sim \Q_{\thetab}$. Note that $\Q_{(0,1)} = \Q$. Now consider a small location-scale misspecification of $\thetab$ that is proportional to $\sigma$, i.e., $\Q_{\thetab+k\sigma \rb}$, where $\rb=(r_1,r_2)$ is a two dimensional unit vector representing the direction of perturbation and $k\in \R$ is a constant. For example, a model with the correct location, $\mu$, but a perturbation of the scale parameter $\sigma+k\sigma$, can be written as $\Q_{\thetab+k\sigma \rb}$ with $\rb=(0,1)$. The idea of local scale invariance is that, for small perturbations, the difference between the score for the perturbed model and the true model  should not depend on the scale. We make this precise in the following definition. 

In the definition and later, for a scoring rule $S$ we use the notation $\mathcal{P}_S$ for the set of probability measures on $(\Omega,\mathcal{F})$ such that if $\p \in \mathcal{P}_S$ and $\Q \in \mathcal{P}_S$ then $|S(\p, \Q)| < \infty$. Further, for  a  set of probability measures
$\mathcal{Q}_0$, we write  $\mathcal{Q} = \{\Q_{\thetab} : \Q \in \mathcal{Q}_0, \thetab \in \mathbb{R}\times \mathbb{R}^+\} $ for the set of probability measures in $\mathcal{P}_S$  which can be obtained as location-scale transforms of measures in $\mathcal{Q}_0$.
If the scoring rule $S$ is proper and twice differentiable with respect to $\theta$, a second order Taylor expansion yields that 
\begin{equation}
\label{eq:scale:invariance} 
S(\Q_{\thetab},\Q_{\thetab})-S(\Q_{\thetab + k\sigma \rb},\Q_{\thetab})= k^2\sigma^2\rb^T s(\Q_{\thetab}) \rb+o(k^2),
\end{equation}
as $k\searrow 0$. The score $S(\Q_{\thetab+k\sigma\rb},\Q_{\thetab})$ has a maximum at $k=0$ due to the properness of $S$. Hence, $\nabla_{\thetab} S(\Q_{\thetab},\Q)\vert_{\Q = \Q_{\thetab}}=0$ and the first order term of the Taylor expansion vanishes. Here, the function $s(\Q_{\thetab}):=\frac{1}{2}\nabla^2_{\thetab} S(\Q_{\thetab},\Q)\vert_{\Q = \Q_{\thetab}}$ is called \emph{scale function} of the scoring rule $S$.
\begin{definition}
\label{def:scale:invariance} 
\citep{Bolin2022LocalRulesCustom} 
Let $S$ be a proper scoring rule with respect to some class of probability measures $\mathcal{P}$ on $(\R,\mathcal{B}(\R))$, and assume that $\mathcal{Q}_0$ is a set of probability measures such that $\mathcal{Q}\subseteq \mathcal{P} \cap \mathcal{P}_S$. 
If $s(\Q_{\thetab})$ exists for all $\Q\in\mathcal{Q}_0$, $\thetab=(\mu,\sigma) \in \R\times \R^+$, and satisfies $s(\Q_{\thetab})\equiv \frac{1}{\sigma^2} s(\Q_{(0,1)})$ we say that $S$ is locally scale invariant on $\mathcal{Q}$.
\end{definition}

 One can show that the CRPS has scale function $s(\Q_{\thetab}) = \sigma s(\Q_{(0,1)})$ for $\thetab = (\mu,\sigma)$, which means that the difference in Eq. \eqref{eq:scale:invariance} scales linearly with $\sigma$. The log-score on the other hand has a scale function $s(\Q_{\thetab}) = \frac{1}{\sigma^2} H_{\Q}$ for a matrix $H_{\Q}$ that is independent of $\thetab$. The factor $\sigma^2$ therefore cancels in Eq. \eqref{eq:scale:invariance}, so that the expression is independent of the scale \citep{Bolin2022LocalRulesCustom}.

A scoring rule that can be written as 
\begin{equation}\label{eq:kernel:first}
S_g^{ker}(\p,y):=\frac{1}{2}E_{\p,\p}[g(X,X')]-E_\p[g(X,y)],    
\end{equation}
where $g$ is a non-negative continuous negative definite kernel is called kernel score~\citep{Dawid2007TheRules}. In order to construct locally scale invariant scoring rules, \citet{Bolin2022LocalRulesCustom} introduced a generalisation of this class of scoring rules, the generalised kernel scores, which are proper scoring rules defined as
\begin{equation}\label{eq:kernel:standard}
  S_g^h(\p,y):=h(E_{\p,\p}[g(X,X')])+2h'(E_{\p,\p}[g(X,X')])(E_\p[g(X,y)]-E_{\p,\p}[g(X,X')]),
\end{equation}
where $h$ is any monotonically decreasing, convex, and differentiable function on $\R^+$. 
For the particular choice $h(x)=-\frac{1}{2}\log(x)$ and $g(x,y)=|x-y|$ in Eq. \eqref{eq:kernel:standard}, one obtains the scoring rule $SCRPS(\p,y)+1$,  where SCRPS is the scaled CRPS scoring rule
\begin{equation}\label{eq:scrps}
    SCRPS(\p,y):=-\frac{E_\p[|X-y|]}{E_{\p,\p}[|X-X'|]}-\frac{1}{2}\log(E_{\p,\p}[|X-X'|]),
\end{equation}
which was shown to be locally scale invariant in \citet{Bolin2022LocalRulesCustom}. Another way of deriving this scoring rule is to use the fact that for any negative proper scoring rule $S(\p,x)$ on 
a set of probability measures $\mathcal{P}$, the transformed score 
\begin{equation}
  S_S^{trans}(\p,y):=\frac{S(\p,y)}{|S(\p,\p)|}-\log(|S(\p,\p)|)\label{eq:scale-invariant}  
\end{equation}
is also a proper scoring rule on $\mathcal{P}$ \citep{Bolin2022LocalRulesCustom}.

\section{Scaled weighted scoring rules}\label{sec:scaled:weghted:scoring:rules}

In this section, we introduce the concept of local weight- and tail-scale invariance as less restrictive alternatives to local scale invariance. We then survey and discuss the use of the scaled wCRPS to score extremes and derive its scaling properties. In the final subsection, we propose alternative combined scoring rules for extremes, concentrating on a scoring rule based on the logarithmic score.

\subsection{Local weight-scale invariance}
When scoring extremes, we are mostly interested in the tail properties of the distributions, and in other cases the interest might be on some specific range of values which not necessarily is extreme. Therefore, it might not be all that important to have full local scale invariance as long as this property  holds in the region of interest. To make this precise, recall that $\mathcal{Q}$ is a family of location-scale transformations of probability measures in $\mathcal{Q}_0$. 
Let $w$ be some weight function representing the region of interest as $\{x : w(x)>0\}$, such that $P_\Q(w(X)>0)>0$ for all $\Q\in\mathcal{Q}_0$ and let $\Q^w$ be the conditional distribution of $X^w = X|w(X)>0$ if $X\sim\Q$. If $X$ has density $f(\cdot)$, then $X^w$ has density $f_w(\cdot) = f(\cdot)/P_\Q(w(X)>0)$. A location-scale transformation $(a + b X)^w$ of $X^w$ has density $f_w((\cdot-a)/b)/b$, which means that if $X^w \in \mathcal{Q}_0$, then $(a+bX)^w \in\mathcal{Q}$. In other words, the conditional distributions of location-scale transformations are also location-scale transformations. 
Let $\mathcal{Q}^w$ denote the set of all these conditional distributions corresponding to $\Q \in \mathcal{Q}$, and we assume for simplicity that the conditional distributions of the measures in $\mathcal{Q}_0$ are in $\mathcal{Q}_0$, so that $\mathcal{Q}^w \subset \mathcal{Q}$.

Because we are interested in the region $\{x : w(x)>0\}$, we define the concept of weight-scale invariance by considering the scaling properties of $S$ when restricted to $\mathcal{Q}^w$.

\begin{definition}\label{def:weight:scale:invariance}
Let $S$ be a proper scoring rule with respect to some class of probability measures $\mathcal{P}$ on $(\R,\mathcal{B}(\R))$, and assume that $\mathcal{Q}_0$ is a set of probability measures such that $\mathcal{Q}\subseteq \mathcal{P}\cap \mathcal{P}_S$. Suppose that $w$ is a weight function and let $\mathcal{Q}^w \subset \mathcal{Q}$ be the set of conditional probability measures as defined above. We say that $S$ is a locally weight-scale invariant scoring rule with respect to a $w$ on $\mathcal{Q}$ if $S$ is locally scale invariant on $\mathcal{Q}^w$. 
\end{definition}

\begin{remark}\label{remark:tail:scale}
Clearly, if $S$ is locally scale invariant, it is also locally weight-scale invariant. However, as we will see later, there are scoring rules that are locally weight-scale invariant but not locally scale invariant.
\end{remark}

Recall that the scoring rule $S$ is locally scale invariant if its scale function satisfies $s(\Q_{\thetab}) = \frac1{\sigma^2}s(Q_{(0,1)})$. Considering the restriction of $S$ to $\mathcal{Q}^w$, we have that $S$ is locally weight-scale invariant if $s(\Q_{\thetab}^w) = \sigma^{-2} s(\Q_{(0,1)}^w)$.

\begin{remark}
    If the interest is focused on the region $w(\cdot)>0$, we could simply use $S$ restricted to $\mathcal{Q}^w$ as a scoring rule, which means that we only consider the score based on the conditional distributions. However, this disregards the probability of $w(\cdot)>0$ which may be important in other situations. This probability is typically dependent on the scale, and local weight-scale invariance can thus be thought of as relaxation of local scale invariance where the scale dependence of the probability of $w(\cdot)>0$ is ignored. 
\end{remark}

In the special case where the weight function is the indicator weight function $w_u$ we refer to local weight-scale invariance as local tail-scale invariance. Specifically, we have the following definition.

\begin{definition}
Let $w_u$ be the indicator weight function in Eq.~\eqref{eq:qwf}. Under the same assumptions as in Definition~\ref{def:weight:scale:invariance}, we say that $S$ is a locally tail-scale invariant scoring rule on $\mathcal{Q}$ if for each $\thetab$ there exists ${u_{\thetab}\in\text{supp}(\Q_{(0,1)})\cap\text{supp}(\Q_{\thetab})}$ such that $S$ is locally scale invariant on $\mathcal{Q}^{w_u}$ for every $u\geq u_{\thetab}$. Similarly, we say that $S$ is locally lower tail-scale invariant on $\mathcal{Q}$ if $S$ is locally scale invariant on $\mathcal{Q}^{1-w_u}$ for every $u\leq u_{\thetab}$. 
\end{definition}

\subsection{Scaled weighted CRPS}

For any kernel score we can use the construction in Eq.~\eqref{eq:kernel:standard} for generalised kernel scores to construct a corresponding scaled version by using $h(x) = -\frac{1}{2}\log(x)$ as done for the SCRPS. By the following proposition, this can be done for the wCRPS. 
\begin{proposition}\label{prop:wcrps:kernelscore}
The wCRPS as defined in Eq. \eqref{eq:wCRPS} is a kernel score.
\end{proposition}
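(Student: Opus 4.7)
The plan is to exhibit an explicit kernel $g$ that puts the wCRPS into the form required by the definition of a kernel score, and then verify the three properties $g$ must satisfy (non-negativity, continuity, negative definiteness). The natural candidate, read directly from Eq.~\eqref{eq:wCRPS}, is
\[
g(x,y) := |W(x) - W(y)|, \qquad W(x) = \int_{-\infty}^{x} w(t)\,dt.
\]
With this choice, $wCRPS(\p,y) = \tfrac{1}{2}E_{\p,\p}[g(X,Y)] - E_\p[g(X,y)]$, which is exactly the form $S_g^{ker}(\p,y)$.

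Non-negativity of $g$ is immediate from the absolute value. Continuity of $g$ on $\R\times\R$ follows from the continuity of $W$: since $w$ is a non-negative (locally integrable) weight function, $W$ is a continuous, non-decreasing map $\R \to [0,\infty]$, and hence $(x,y) \mapsto |W(x)-W(y)|$ is jointly continuous wherever it is finite (which is the relevant range for any $\p$ for which wCRPS is finite).

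The one step requiring real work is negative definiteness. I would rely on the classical fact that $(s,t) \mapsto |s-t|$ is a conditionally negative definite kernel on $\R$; this is exactly what underlies the Baringhaus--Franz representation $\tfrac{1}{2}E|X-Y| - E|X-y|$ being a proper kernel score, and can be proved either by direct computation using $|s-t| = \int_\R \bigl(\mathbf{1}\{s\leq u \leq t\} + \mathbf{1}\{t\leq u \leq s\}\bigr)\,du$, or by noting that for each $u$ the map $(s,t) \mapsto (\mathbf{1}\{s\leq u\} - \mathbf{1}\{t\leq u\})^2$ is conditionally negative definite and integrating over $u$. I would then invoke the standard pullback stability of (conditionally) negative definite kernels: if $k$ is conditionally negative definite on $\mathcal{Y}$ and $\varphi : \mathcal{X} \to \mathcal{Y}$ is any map, then $k(\varphi(x), \varphi(y))$ is conditionally negative definite on $\mathcal{X}$. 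The proof is a one-line verification: given $c_1,\ldots,c_n \in \R$ with $\sum_i c_i = 0$ and $x_1,\ldots,x_n \in \mathcal{X}$, setting $y_i = \varphi(x_i)$ gives $\sum_{i,j} c_i c_j k(\varphi(x_i),\varphi(x_j)) = \sum_{i,j} c_i c_j k(y_i,y_j) \leq 0$. Applying this with $\varphi = W$ and $k(s,t) = |s-t|$ yields the conditional negative definiteness of $g$.

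The main (mild) obstacle is really only bookkeeping: one should state the definition of negative definite kernel being used (conditional negative definiteness is the version that makes $|\cdot-\cdot|$ work), and make sure the pullback lemma is recorded cleanly so that composition with $W$ is justified without further assumptions on $w$ beyond those needed for $W$ to be a well-defined, non-decreasing, continuous function. Putting these three ingredients together gives $g$ as a non-negative, continuous, (conditionally) negative definite kernel, and hence shows that wCRPS is a kernel score in the sense of Section~\ref{sec:preliminaries}.
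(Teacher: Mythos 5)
Your argument is correct and is essentially the paper's own proof: both take $g_w(x,y)=|W(x)-W(y)|$ and establish negative definiteness by the substitution $y_i=W(x_i)$, reducing to the known conditional negative definiteness of $|s-t|$ (your ``pullback lemma'' is exactly the one-line computation the paper writes out). The only difference is presentational: you additionally spell out non-negativity and continuity of $g_w$, which the paper leaves implicit.
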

This proposition was shown previously by \citet{Allen2022EvaluatingScores}, and for completeness, a proof is given in Appendix~\ref{sec:appendix:proofs}.
Thus, by this proposition combined with the construction in Eq.~\eqref{eq:kernel:standard}, the scoring rule defined as
\begin{equation}
\begin{aligned}
    swCRPS(\p,y)&:=-\frac{E_\p[g_w(X,y)]}{E_{\p,\p}[g_w(X,X')]}-\frac{1}{2}\log(E_{\p,\p}[g_w(X,X')])\label{eq:score:swcrps}
\end{aligned}
\end{equation}
is proper, and can be a suitable alternative for evaluating extremes. 
As for the SCRPS, an alternative definition of the swCRPS would be to use Eq. \eqref{eq:scale-invariant} on the wCRPS. This definition was proposed in \citep{Vandeskog2022}, and since the wCRPS is a kernel score the two definitions  in fact coincide, up to an additive constant. 

No matter which of the arguments we use to define the swCRPS, it is not guaranteed that it is locally scale invariant. We now derive to what extent the swCRPS is scale invariant. 
The following assumption will be used on the distributions we consider. 
\begin{assumption}\label{assumption1}
The Borel probability measure $\Q$ on $\R$ has density $\exp(\Psi)$ with respect to Lebesgue measure for some twice differentiable function $\Psi$ such that the expectations $E_\Q[\Psi'(X)]$, $E_\Q[\Psi''(X)]$ and $E_\Q[(\Psi'(X))^2]$ are finite.
\end{assumption}

Under this assumption, we have the following result for the wCRPS and the swCRPS, which is proven in Appendix~\ref{sec:appendix:proofs}. 
\begin{proposition}\label{prop:wCRPS:scale:function}
Let $\mathcal{Q}$ be a set of location-scale transformed probability measures $\Q_{\thetab}$ satisfying Assumption \ref{assumption1}. Then the following holds:
\begin{enumerate}[(i)]
    \item The wCRPS with indicator weight function $w_u$ is neither locally scale invariant nor locally tail-scale invariant on $\mathcal{Q}$.
    \item The swCRPS as defined in Eq. \eqref{eq:score:swcrps} with indicator weight function $w_u$ is locally tail-scale invariant but not locally scale invariant on $\mathcal{Q}$.
\end{enumerate}
\end{proposition}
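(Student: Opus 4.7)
The proof hinges on the structural observation that for the quantile weight $w_q$, whenever both $x$ and $y$ exceed some $u \geq q$, one has $W_q(x) - W_q(y) = x - y$. Consequently, the conditional expectations in the wCRPS and swCRPS for $u \geq q$ reduce to those of the ordinary CRPS and SCRPS evaluated on the conditional distributions $\p^u$ and $\Q_{\thetab}^u$. Assumption~\ref{assumption1} is used throughout to justify differentiation under the integral sign via dominated convergence.

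For part (i), I would first compute the scale function of the wCRPS by differentiating twice in $\thetab'$. Writing $X = \mu + \sigma Z$ with $Z \sim \Q$ gives
\begin{equation*}
E_{\Q_{\thetab'},\Q_{\thetab'}}[|W_q(X) - W_q(X')|] = \sigma'\, c\!\left(\tfrac{q-\mu'}{\sigma'}\right), \quad c(\tilde q) := E_{\Q,\Q}\!\left[\left|(Z-\tilde q)^+ - (Z'-\tilde q)^+\right|\right],
\end{equation*}
with an analogous expression for the cross term. Second differentiation at $\thetab' = \thetab$ produces a scale function whose $\sigma$-dependence is linear rather than $\sigma^{-2}$, so the wCRPS fails local scale invariance. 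The failure of local tail-scale invariance is immediate from the reduction $wCRPS_u = CRPS_u$ for $u > q$: the tail-scale function equals the CRPS scale function on the conditional distributions, and the location-scale structure $\Q_{\thetab}^u \stackrel{d}{=} \mu + \sigma\,(Z \mid Z > \tilde u)$ propagates the CRPS's linear-in-$\sigma$ scaling to the conditional problem.

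For part (ii), the failure of local scale invariance follows from the same differentiation applied to \eqref{eq:score:swcrps}. The logarithmic correction from $h(x) = -\tfrac{1}{2}\log x$ cancels the bulk linear $\sigma$-scaling that spoils the CRPS, but the kink of $W_q$ at $q$ leaves boundary contributions involving the density of $\Q_{\thetab}$ near $q$; these depend on $(q-\mu)/\sigma$ and carry a $\sigma$-dependence that the logarithmic correction cannot absorb, so the scale function fails to equal $\sigma^{-2}\,s(\Q_{(0,1)})$.

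The central claim is the local tail-scale invariance of the swCRPS, which is the main obstacle. For $u > q$, the identification $swCRPS_u(\Q_{\thetab'}, \Q_{\thetab}) = SCRPS(\Q_{\thetab'}^u, \Q_{\thetab}^u)$ allows me to write each conditional distribution as $\mu + \sigma Z$ with $Z \sim \Q$ conditioned on $Z > \tilde u := (u-\mu)/\sigma$. Combining this with the positive homogeneity of the SCRPS kernel, which gives the identity
\begin{equation*}
SCRPS(\mu + \sigma \p,\, \mu + \sigma y) = SCRPS(\p, y) - \tfrac{1}{2}\log \sigma,
\end{equation*}
lets me peel off the common location-scale factor and rewrite $swCRPS_u(\Q_{\thetab'}, \Q_{\thetab})$ as a function whose $\thetab'$-dependence enters only through the normalised quantities $\tilde u(\thetab') = (u-\mu')/\sigma'$ and $(Y-\mu')/\sigma'$, together with an explicit $-\tfrac{1}{2}\log\sigma'$ term. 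Two derivatives in $\thetab'$ at $\thetab' = \thetab$, with the chain rule passing through $\tilde u(\thetab')$ and $(Y - \mu')/\sigma'$, each introduce a factor of $\sigma^{-1}$, which combine with the second derivative $\tfrac{1}{2\sigma^2}$ of the logarithmic term to produce an overall $\sigma^{-2}$ factor in the tail-scale function. The delicate point is coordinating the two sources of $\thetab'$-dependence — the explicit location-scale rescaling of the conditional distribution and the implicit shift of the truncation level — so that they conspire to leave no residual $\sigma$-factor beyond the desired $\sigma^{-2}$.
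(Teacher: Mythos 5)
Your proposal is correct and rests on the same two pillars as the paper's proof: that $W_q(x)=(x-q)^+$ makes the kernel $g_{w_q}(\mu+\sigma x,\mu+\sigma y)$ homogeneous of degree one only up to the $\thetab$-dependent shift $(q-\mu)/\sigma$, reducing exactly to $\sigma|x-y|$ once both arguments exceed that shift, and that the logarithmic transform in the swCRPS converts this degree-one homogeneity into the desired $\sigma^{-2}$ behaviour on the tail. Where you differ is in the organisation of part (ii): the paper computes $\nabla_{\thetab}^2$ of the swCRPS directly from the generalised-kernel-score form, via the quotient/log Hessian $\frac{1}{E_{P,P}}E_{\dot P,\dot P}-\frac{2}{E_{P,P}^2}E_{\dot P,P}E_{P,\dot P}^{\trans}$ together with the gradient formulas of Lemma~1 of \citet{Bolin2022LocalRulesCustom}, whereas you first establish the exact identity $swCRPS_u(\p,\Q)=SCRPS(\p^u,\Q^u)$ for $u\geq q$ and then invoke the homogeneity $SCRPS(\mu+\sigma\p,\mu+\sigma y)=SCRPS(\p,y)-\tfrac{1}{2}\log\sigma$; this sidesteps the explicit Hessian computation and makes the cancellation transparent, but the mathematical content coincides. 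Two cautions. First, your claim that the wCRPS scale function is ``linear in $\sigma$'' is imprecise: differentiating $\sigma'\,c\bigl((q-\mu')/\sigma'\bigr)$ twice and evaluating at $\thetab'=\thetab$ yields $\sigma^{-1}$ times functions of $(q-\mu)/\sigma$ (it is the score difference \eqref{eq:scale:invariance} that scales linearly in $\sigma$); the conclusion that this is not of the form $\sigma^{-2}s(\Q_{(0,1)})$ is unaffected. Second, the ``delicate point'' you flag --- that the truncation level of the conditional base measure, $(u-\mu)/\sigma$, itself depends on $\thetab$ --- is real: both your argument and the paper's deliver $s_u(\Q_{\thetab})=\sigma^{-2}\Psi\bigl((u-\mu)/\sigma\bigr)$ for $u\geq q$, and the residual dependence on $(u-\mu)/\sigma$ is something the paper's (loosely stated) definition of tail-scale invariance is implicitly taken to tolerate, so on this point you are no worse off than the published proof.
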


Other versions of the CRPS, the rCRPS and the rSCPRS as introduced in \cite{Bolin2022LocalRulesCustom}, are deinfed by inserting the kernel function
\begin{equation}
    g_c(x,y)=\begin{cases}
        |x-y|,&|x-y|<c\\
        0,&\text{otherwise}
    \end{cases}
\end{equation}
into the definition kernel scores in Eq. \eqref{eq:kernel:first} and into Eq. \eqref{eq:score:swcrps}. They showed that neither of these score were locally scale invariant. However, one can show their local weight-scale invariance.

\begin{proposition}\label{prop:rCRPS}
\begin{enumerate}[(i)]
    \item The rCRPS is not locally weight-scale invariant with respect to the indicator weight function $w^u(y)=1\{|y|<u\}$ for any $u>0$.
    \item The rSCRPS is locally weight-scale invariant with respect to $w^u$ for $u\leq \frac{c}{2}$.
\end{enumerate}
\end{proposition}

The steps taken above by using the construction in Eq.~\eqref{eq:kernel:standard} to create the swCRPS can be done on the more general threshold weighted kernel score, proposed by \citet{Allen2022EvaluatingScores}, and defined as
\begin{equation}\label{eq:Allen:twks}
twS_\rho(\p,y;v)=\frac{1}{2}E_{\p,\p}[\rho(v(X),v(X'))]-E_{\p}[\rho(v(X),v(y)]+\frac{1}{2}\rho(v(y),v(y))
\end{equation}
where $\rho$ is a continuous negative-definite kernel and $v(x)-v(x')=\int_x^{x'} w(t)dt$ is a measurable chaining function. This results in a scaled score 
\begin{equation}\label{eq:Allen:twks:scaled}
stwS_{\rho}(\p,y;v)=-\frac{E_{\p}[\rho(v(X),v(y)]}{E_{\p,\p}[\rho(v(X),v(X'))]}-\frac{1}{2}\log(E_{\p,\p}[\rho(v(X),v(X'))]).
\end{equation}
However, we focus on the more specific swCRPS to keep the exposition reasonably short and leave further investigations of these scaled scores for future research.

As mentioned earlier, one main benefit of kernel scores is that the expected values can be computed with Monte Carlo methods without knowing the true distribution. However, under an assumed model distribution, the kernel scores can often be computed analytically. This is in particular the case for the Generalised Extreme Value (GEV) distribution, as shown in Appendix~\ref{sec:models}.

\subsection{Local weight-scale invariance of the censored likelihood score}

It follows from the definition that a sum of two proper scores is also proper, more generally that a weighted sum of several proper scores is proper, and  additionally that if at least one of the summands is strictly proper, then the sum is also strictly proper. This can be used to propose alternative scores which may be less likely to suffer the problems of overweighting heavy tails as discussed above for the quantile scores \citep[see also, e.g.,][]{Lerch2017}. 

If one is interested in good prediction of values exceeding some threshold $u$, there are two complementary questions: a) how well does a method predict the sizes of the excesses which occur, and  b) does the method give good prediction of the probability of exceedance. Here a) can be approached by only scoring the values of the excesses which actually occur, using the conditional distribution of these, $f(x)/(1-F(u))$. Question b) instead is a binary prediction problem and one can use any of the many scoring rules for such problems. Further, it seems reasonable to use related scoring rules for a) and b).

\citet{Diks2011} uses this idea to combine the logarithmic score for excesses
\begin{equation*}
S(\p, y) = 1\{y > u\}\log (f(y)/(1-F(u)),    
\end{equation*}
where $f$ and $F$ are the density and cumulative distribution functions corresponding to $\p$, 
with the binary logaritmic scoring rule for exceedance of the threshold $u$,
\begin{equation*}
    S(\p,y) = 1\{y \leq u\}\log(F(u)) +  1\{y > u\}\log (1-F(u)),
\end{equation*}
to obtain the so-called censored likelihood score
\begin{equation}\label{eq:lsq}
    LS_{u}(\p,y) = 1\{y\leq u\}\log(F(u))+1\{y > u\}\log(f(y)).
\end{equation}
This is a proper scoring rule and is studied in simulations and application in the next two sections. We also have the following result, proven in Appendix~\ref{sec:appendix:proofs}.

\begin{proposition}\label{prop:lsq} 
Let $\mathcal{Q}$ be a set of location-scale transformed probability measures $\Q_{\thetab}$ satisfying Assumption \ref{assumption1}. Then the censored likelihood score in Eq. \eqref{eq:lsq} is locally tail-scale invariant on $\mathcal{Q}$, but in general not locally scale invariant. 
\end{proposition}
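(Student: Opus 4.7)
The plan is to split the argument along the natural partition of $LS_q$ into a below-threshold censored piece and an in-tail log-density piece. The key structural fact is that when the conditioning level $u$ exceeds $q$, the censored piece vanishes from $LS_q(\p^u, y)$ for $y>u$: both $\mathbf{1}\{y\le q\}=0$ and $\mathbf{1}\{y>q\}=1$, so
$$LS_q(\p^u,y)=\log f^u(y)=LS(\p^u,y),$$
whence $S_u^{LS_q}(\p,\Q)=S^{LS}(\p^u,\Q^u)=S_u^{LS}(\p,\Q)$ and the tail-scale functions coincide, $s_u^{LS_q}\equiv s_u^{LS}$, for every $u\ge q$.

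For the tail-scale invariance claim I would then invoke the logarithmic score. By \citet{Bolin2022LocalRulesCustom}, $LS$ is locally scale invariant on location--scale families whose densities satisfy Assumption \ref{assumption1}, and Proposition \ref{prop:scale:tail:scale} upgrades this to local tail-scale invariance on $\mathcal{Q}$ once one checks that truncation preserves the hypotheses of the ambient class. The latter is bookkeeping: the truncated log-density $\Psi(\cdot)-\log(1-G(u))$ has the same first and second derivatives as $\Psi$, so twice differentiability is automatic, and finiteness of $E_\Q[\Psi'(X)]$, $E_\Q[\Psi''(X)]$, $E_\Q[(\Psi'(X))^2]$ implies the corresponding conditional expectations above $u$ are finite as well. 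Combined with $s_u^{LS_q}\equiv s_u^{LS}$ for $u\ge q$, this yields $s_u^{LS_q}(\Q_{\thetab})=\sigma^{-2}s_u^{LS_q}(\Q_{(0,1)})$ for every $u\ge u_{\thetab}:=\max\{q,u^{LS}_{\thetab}\}$, which is local tail-scale invariance of $LS_q$.

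For the second claim I would compute $s(\Q_{\thetab})$ of the unconditioned score directly. Writing
$$L(\thetab',\thetab)=G_{\thetab}(q)\log G_{\thetab'}(q)+\int_q^{\infty}(\log g_{\thetab'}(y))\,g_{\thetab}(y)\,dy,$$
the first derivative in $\thetab'$ vanishes at $\thetab'=\thetab$ by propriety, and a direct second-derivative calculation, using Assumption \ref{assumption1} to interchange differentiation and integration, collapses to
$$s(\Q_{\thetab})=\tfrac12\left[\frac{\nabla G_{\thetab}(q)\nabla G_{\thetab}(q)^{\trans}}{G_{\thetab}(q)}+\int_q^{\infty}\frac{\nabla g_{\thetab}(y)\nabla g_{\thetab}(y)^{\trans}}{g_{\thetab}(y)}dy\right]$$
after cancellation of the two $\nabla^2$--terms against $\nabla^2[G_{\thetab}(q)+(1-G_{\thetab}(q))]=0$. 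Substituting the location-scale parametrisation $z=(y-\mu)/\sigma$, $q'=(q-\mu)/\sigma$ pulls out an overall factor $\sigma^{-2}$ and leaves the remaining bracket as a matrix $T(q')$ depending only on the reduced threshold and on $\Q$, so $s(\Q_{\thetab})=\sigma^{-2}T(q')$ while $s(\Q_{(0,1)})=T(q)$. Since $q'$ depends on $\thetab$, local scale invariance would force $T$ to be constant, which fails for concrete examples (for instance $\Q=N(0,1)$ with $q=0$, $\mu=1$, $\sigma=1$ gives $q'=-1$ and a direct evaluation of the two terms shows $T(-1)\neq T(0)$). The principal obstacle is the technical verification in part (i) that Proposition \ref{prop:scale:tail:scale} is applicable, i.e.\ that conditioning above $u$ does not take us outside the class to which local scale invariance of $LS$ has been established; once that closure is in hand the rest is routine Taylor expansion.
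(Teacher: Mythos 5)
Your argument for local tail-scale invariance is essentially the paper's own proof: observe that conditioning on $Y>u$ with $u\geq q$ annihilates the censored term, so that $S_u$ for $LS_q$ coincides with $S_u$ for the logarithmic score, then invoke local scale invariance of the logarithmic score from \citet{Bolin2022LocalRulesCustom} together with Proposition~\ref{prop:scale:tail:scale}; your explicit check that truncation preserves Assumption~\ref{assumption1} (same $\Psi'$, $\Psi''$, dominated conditional moments) is exactly the closure condition the paper's proof of Proposition~\ref{prop:scale:tail:scale} requires and only asserts in passing, so that bookkeeping is a welcome addition rather than a deviation. Where you genuinely go beyond the paper is the second claim: the paper's proof never addresses ``in general not locally scale invariant'' at all, whereas you derive the unconditioned scale function as a censored Fisher-information matrix, $s(\Q_{\thetab})=\frac12\bigl[\nabla G_{\thetab}(q)\nabla G_{\thetab}(q)^{\trans}/G_{\thetab}(q)+\int_q^\infty \nabla g_{\thetab}\nabla g_{\thetab}^{\trans}/g_{\thetab}\,dy\bigr]$, and note that the location-scale substitution yields $\sigma^{-2}T((q-\mu)/\sigma)$ with a reduced threshold that moves with $\thetab$, so invariance would force $T$ to be constant. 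That computation is correct (the two Hessian terms do cancel via $\nabla^2[G_{\thetab}(q)+(1-G_{\thetab}(q))]=0$), and the Gaussian counterexample works: for instance the $(\mu,\mu)$ entry of $T$ is $\varphi(q')^2/\Phi(q')+\int_{q'}^\infty z^2\varphi(z)\,dz$, which evaluates to roughly $0.82$ at $q'=0$ and $0.97$ at $q'=-1$. The only thing I would ask you to tighten is that last step: you assert ``a direct evaluation shows $T(-1)\neq T(0)$'' without performing it, and since this inequality is the entire content of the negative claim, the one-line computation should be included rather than gestured at. With that filled in, your proposal is complete and in fact proves strictly more than the paper's printed proof does.
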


The above type of weighted scoring rule was generalised in \citet{HolzmanKlar2017} as follows. First, let $p_w$ represent the re-normalised density of $p$ with respect to weight function $w$, i.e.
\begin{equation}
    p_w(y)=\frac{w(y)p(y)}{\int_\R w(t)p(t)dt}.
\end{equation}
If $S_0$ is a proper scoring rule, then
\begin{equation}\label{eq:score:outcome-weighted}
    owS(\p,y;w)=w(y)S_0(\p_w,y),
\end{equation}
where $\p_w$ is the distribution with density $p_w$, is a proper scoring rule. This is called an outcome weighted scoring rule in \citet{Allen2022EvaluatingScores}.

Next, for a strictly proper scoring rule ${\bf s}(\alpha,z)$ for the success probability $\alpha\in(0,1)$ of a binary outcome variable $z\in\{0,1\}$, the score
\begin{equation}\label{eq:score:success-prob}
    S_{\bf s}(\p,y;w)=w(y){\bf s}\left(\int_\R p(t)w(t)dt,1\right)+(1-w(y)){\bf s}\left(\int_\R p(t)w(t)dt,0\right)
\end{equation}
is said to be a localising proper weighted scoring rule for the density forecast $p$. Finally, adding Eq. \eqref{eq:score:outcome-weighted} and Eq. \eqref{eq:score:success-prob} yields a proper scoring rule
\begin{equation}\label{eq:score:combined}
    \hat{S}(\p,y,w) = S_{\bf s}(\p,y;w)+owS(\p,y;w).
\end{equation}
 With $S_0$ and ${\bf s}$ as log scores, the score $\hat{S}$ as defined in Eq. \eqref{eq:score:combined} becomes a censored likelihood with a general weight function $w$ instead of the weight function $w_u$ defined in Eq.~\eqref{eq:qwf}.

\begin{proposition}\label{prop:general_cencoredlikelohood}
Given that $\lambda({x : w(x) = 1}) > 0$,  where $\lambda$ is the Lebesgue measure, the general censored Likelihood score
    \[CLog(\p,y;w)=w(y)\log\left(\int_\R p(t)w(t)dt\right)+(1-w(y))\log\left(1-\int_\R p(t)w(t)dt\right)+w(y)\log(p_w(y))\]
    is locally weight-scale invariant with respect to $\tilde{w}(x)=1\{w(x)=1\}$.
\end{proposition}

\begin{proposition}\label{prop:combined-scores} 
Suppose that $S_0$ and $S_1$ are locally weight-scale invariant with respect to $w$. Then
\begin{enumerate}[(i)]
    \item If $w$ is an indicator weight function, then $owS$ as defined in Eq. \eqref{eq:score:outcome-weighted} is locally weight-scale invariant with respect to $w$.
    \item $S_0+S_1$ is locally weight-scale invariant with respect to $w$.
\end{enumerate}
\end{proposition}

\section{Simulation studies}\label{sec:results:simulation}
 
This section contains two examples with simulated data, which illustrate the effect of scale dependence for extremes using the wCRPS and the swCRPS.

\subsection{Benchmark example}\label{sec:res:sim:benchmark}

We start by comparing the predictive ability of CRPS against SCRPS in terms of tail regimes. A comparison using CRPS was previously done using a benchmark example in \citet{TAILLARDAT2022}. They considered the hierarchical model
\begin{equation*}
    \begin{cases}
    Z \overset{d}{=} \text{Gamma}(\xi^{-1},\xi^{-1})&\\
    Y|Z \overset{d}{=} \text{Exp}(Z) 
    \end{cases}
\end{equation*}
where $1>\xi>0$, $\text{Gamma}(\alpha,\beta)$ denotes a gamma distribution with shape parameter $\alpha$ and rate parameter $\beta$, and $\text{Exp}(\delta)$ denotes an exponential distribution with scale parameter $\delta$. Using this as the true model, four forecasting models are compared. The ideal model is an exponential model using observed values of $Z$. The extremist model instead underestimates the scale parameter of the exponential distribution, assuming it to be $Z/\nu$, where $\nu>1$. The climatological model has no information about $Z$ but instead uses the unconditional and distributionally equivalent Generalized Pareto distribution, $\text{GP}(1,\xi)$, with scale parameter one and shape parameter $\xi$. And finally, the $\tau$-informed model is a mixture distribution between the ideal and climatological model distributions. The models are listed in Table \ref{tab:benchmark:models}.

\begin{table}[t]
    \centering
    \caption{Forecasting models for benchmarking, taken from \citet{TAILLARDAT2022}. The true model is $Y|Z \overset{d}{=} \text{Exp}(Z)$, where $Z \overset{d}{=} \text{Gamma}(\xi^{-1},\xi^{-1})$, $1>\xi>0$.}
    \label{tab:benchmark:models}
    \begin{tabular}{ll}
    \toprule
      Forecast & Density \\
     \hline
      Ideal, $\p_{ideal}$   &  $f_{\text{Exp}(Z)}$\\
      Climatological, $\p_{clim}$ & $f_{\text{GP}(1,\xi)}$\\
      $\tau$-Informed, $\p_\tau$& $\tau f_{\text{Exp}(Z)}+(1-\tau)f_{\text{GP}(1,\xi)}$\\
      Extremist, $\p_{extr}$ & $f_{\text{Exp}(Z/\nu)},\;\nu>1$\\
      \bottomrule
    \end{tabular}
\end{table}

The CRPS for the extreme forecast and the $\tau$-informed forecast are
\begin{equation}\label{crps:extr}
    CRPS(\p_{extr},y)=-y-\frac{2\nu}{\delta}\exp\left(-\frac{\delta y}{\nu}\right)+\frac{3\nu}{2\delta}
\end{equation}
and
\begin{equation}\label{crps:lambda}
\begin{aligned}
   CRPS(\p_\tau,y)=&-y-\frac{\tau^2}{2\delta}-\frac{2\tau}{\delta}[\exp(-\delta y)-1]+\frac{2(1-\tau)}{1-\xi}\left[1-(1+\xi y)^{\frac{\xi-1}{\xi}}\right]\\
    &-\frac{(1-\tau)^2}{2-\xi}-\frac{2\tau(1-\tau)\exp\left(\frac{\delta}{\xi}\right)}{\xi^{\frac{1}{\xi}}\delta^{\frac{\xi-1}{\xi}}}\Gamma_u\left(\frac{\xi-1}{\xi},\frac{\delta}{\xi}\right),
\end{aligned}
\end{equation}
respectively, as shown by \cite{TAILLARDAT2022}. Here, $\Gamma_u(a,\tau)=\int_\tau^\infty t^{a-1}e^{-t}dt$ denotes the upper incomplete gamma function. From the CRPS scores of Eq. \eqref{crps:extr} and \eqref{crps:lambda}, one can compute the SCRPS by first noting that $E_{\p_\text{extr},\p_\text{extr}}[|X-X'|]=\frac{\nu}{\delta}$,
\begin{align*}
    E_{\p_\tau,\p_\tau}[|X-X'|]&=2\left(\frac{\tau}{\delta}-\frac{\tau^2}{2\delta}+\frac{1-\tau}{1-\xi}-\frac{(1-\tau)^2}{2-\xi}-\frac{2\tau(1-\tau)\exp\left(\frac{\delta}{\xi}\right)}{\xi^{\frac{1}{\xi}}\delta^{\frac{\xi-1}{\xi}}}\Gamma_u\left(\frac{\xi-1}{\xi},\frac{\delta}{\xi}\right)\right),
\end{align*}
and
    $E_{\p_\text{clim},\p_\text{clim}}[|X-X'|]=\frac{2}{(2-\xi)(1-\xi)}$,
and then using Eq. \eqref{eq:scrps} to compute the SCRPS.

We simulate $10^6$ observations of the model, for two choices of $\xi$, $\xi=0.25$ and $\xi=0.5$. The higher $\xi$ is, the lower is the variability in $Z$, and thus the scale of $Y|Z$. Comparing the ratio of the mean scores of the different models to the corresponding means for the ideal model for CRPS and SCRPS creates an ordering of the models, as seen in Table \ref{tab:res:benchmark}. Both scores order as expected within the $\tau$-informed models and within the extremist models, with decreasing $\tau$ and increasing $\nu$ respectively. However, the ordering within the two models depends on the choice of scoring rule. Hence, by choosing the scoring rule, one makes an implicit choice regarding which aspects of the model that are important. For example, for $\xi=0.25$, the climatological forecast is worse than the extremist forecast with $\nu = 1.8$ according to the CRPS but better according to the SCRPS, probably due to how the CRPS penalises the forecasts with larger uncertainty more than the SCPRS does, while the SCRPS prefers the models with the same relative error over the climatological model.

The sensitivity and power of the respective scores are also of interest, for different values of $\nu$. For a score $S$, the extreme model is compared against the ideal model on a simulated dataset with 1000 observations using a two-sided pairwise t-test. This is repeated 1000 times with different simulated datasets. The SCRPS better identifies the ideal model from the extremist model, as seen in Figure
\ref{fig:wilcox:benchmark}, with higher power when comparing the ideal and the extremist model. Moreover, the power of the SCRPS does not change for increased values of $\xi$ while the power of the CRPS decreases.

\begin{table}[t]
    \centering
    \caption{Ratio of the mean CPRS and mean SCRPS with respect to the corresponding means for the ideal forecast for two different choices of $\xi$. The mean was taken over $10^6$ independently simulated observations.}
    \label{tab:res:benchmark}
    \begin{tabular}{lcccc}
    \toprule
         \multirow{2}{*}{Forecast} & \multicolumn{2}{c}{$\xi=0.25$} & \multicolumn{2}{c}{$\xi=0.50$} \\
         \cmidrule(r){2-3}\cmidrule(r){4-5}
         &CRPS&SCRPS&CRPS&SCRPS\\
         \hline
      Ideal, $\p_{ideal}$   & $100\%$   & $100\%$& $100\%$   & $100\%$ \\
      Extremist, $\nu=1.1$ & $100.48\%$   & $100.41\%$ & $100.47  \%$   & $100.39\%$\\
      $0.75$-Informed & $100.89\%$ & $101.28\%$ & $102.14  \%$   & $104.26  \%$\\
      $0.5$-Informed & $103.56\%$ & $103.76\%$ & $108.47  \%$   & $109.93  \%$\\
      Extremist, $\nu=1.4$ & $106.67\%$   & $104.62\%$ & $106.64 \%$   & $104.35\%$\\
      $0.25$-Informed & $108.02\%$ & $107.20\%$ & $119.00\%$   & $116.31\%$\\
      Climatological, $GP(1,\xi)$ & $114.27\%$ & $113.67 \%$ & $133.72\%$   & $131.58  \%$\\
      Extremist, $\nu=1.8$ & $122.87\%$   & $112.69\%$ & $122.83\%$   & $111.94\%$\\
      \bottomrule
    \end{tabular}
\end{table}

\begin{figure}[t]
    \centering
    \includegraphics[width=0.8\textwidth]{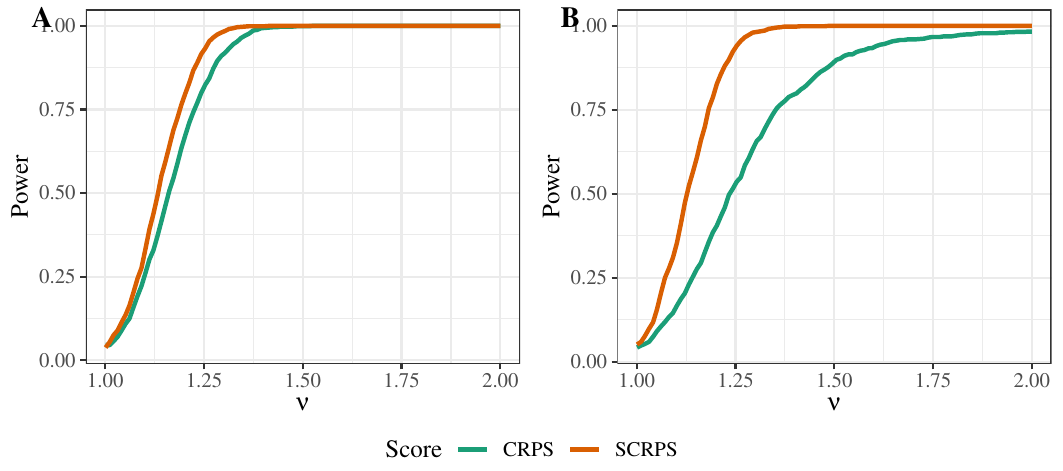}
    \caption{Benchmark model simulation showing power of a two-sided pairwise t-test when scoring 1000 independently simulated stationary time series of length 1000 from the ideal model $Y|Z\sim \text{Exp}(Z)$, $Z\sim\text{Gamma}(\xi^{-1},\xi^{-1})$ using the extremist model with parameter $\nu\in(1.001,2)$ and shape parameters $\xi=0.25$ (left) and $\xi=0.5$ (right).}   
    \label{fig:wilcox:benchmark}
\end{figure}

\subsection{Score dependence on scale and threshold}\label{sec:res:sim:escore:scale}

\begin{figure}[t]
     \centering
    \includegraphics[width=0.8\textwidth]{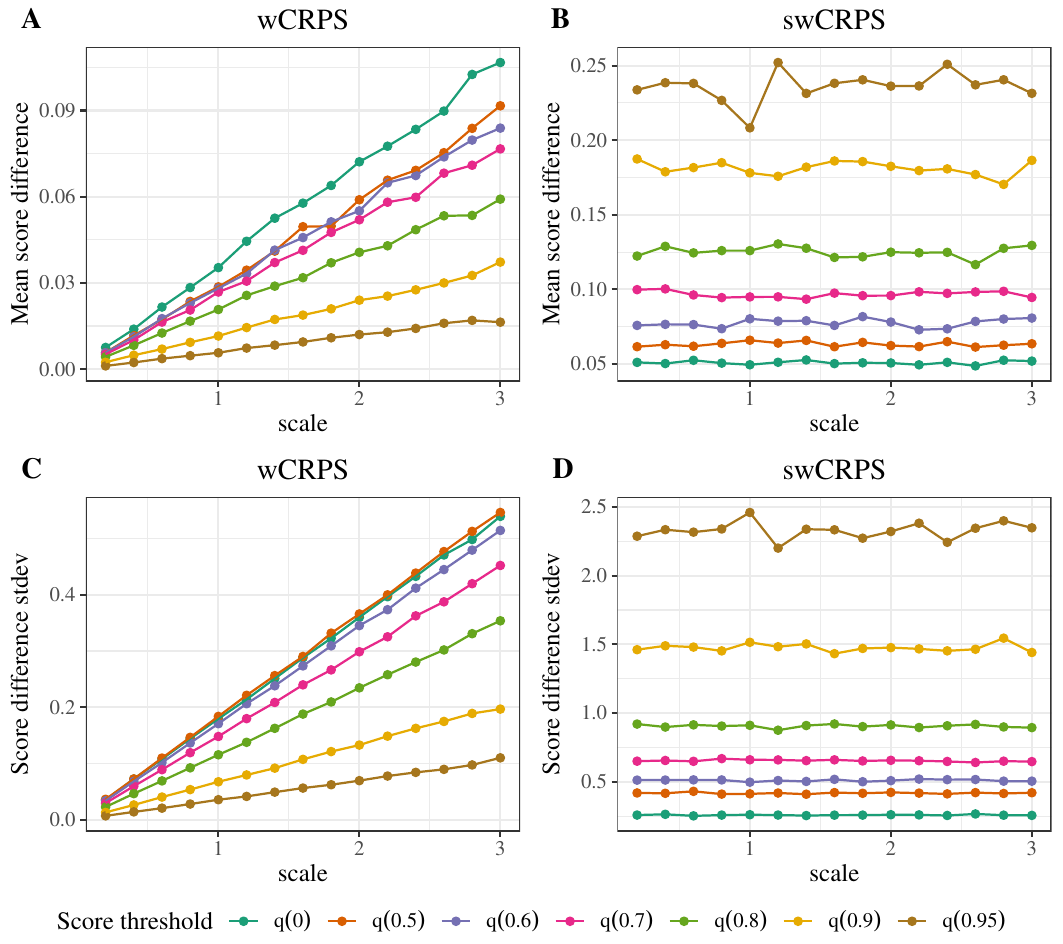}
        \caption{Simulated mean (top) and standard deviation (bottom) of the score difference ${S(\Q_{\thetab},\Q_{\thetab})-S(\p,\Q_{\thetab})}$ using wCRPS (left) and swCRPS (right) for two predictions of a random variable $X\sim \Q_{\thetab}=GEV(\mu=0,\sigma,\gamma=0.12)$, with $\p=GEV(\mu=0,2\sigma,\gamma=0.12)$, as functions of the scale parameter $\sigma$ for different thresholds $q(p)$, chosen as the $p$-th quantile from $\Q$. Threshold $q(-\infty)$ results in the unweighted scores, CRPS and SCRPS.}
        \label{fig:res:sim:escore:scale}
\end{figure}
For an observation from $X\sim \Q_{\thetab}=GEV(\mu=0,\sigma,\gamma=0.12)$, let us consider the effect of choosing a wrong model by computing the expected score difference $S(\Q_{\thetab},\Q_{\thetab})-S(\p,\Q_{\thetab})$ where ${\p=GEV(\mu=0,2\sigma,\gamma=0.12)}$. Here the value $\gamma=0.12$ of the shape parameter is chosen to mimic the behaviour of the extreme rainfall events which are studied in Section~\ref{sec:results:casestudy}. 
The mean and standard deviation of the score difference is computed from 50.000 simulated observations of $X$. For the weighted scores, the thresholds $q(p)$ are obtained as the $p$-th quantile of $\Q_{\thetab}$,
$
    q(p)=\mu+\frac{\sigma}{\gamma}\left((-\log p)^{-\gamma}-1\right)
$
for $p>0$, and $q(0)$ represent the unweighted scores.
Figure \ref{fig:res:sim:escore:scale} shows that for the wCRPS, differences increase with increased $\sigma$ while they remain stable for the swCRPS. The behaviour is similar for each choice of threshold. Variability increases with scale for the wCRPS but it remains constant for the swCRPS. Thus the local tail-scale invariance does not only affect the mean score but also the score variance. The fact that the mean and variance of scores remain the same for different scales simplifies understanding and use of empirical distributions of scores from predictions with different scales. It can also help scoring scaled locations simultaneously, as we will see in the following section.

\subsection{Scaling effect on expected scores}\label{sec:res:sim:escore}

\begin{figure}[t]
     \centering
          \includegraphics[width=0.9\textwidth]{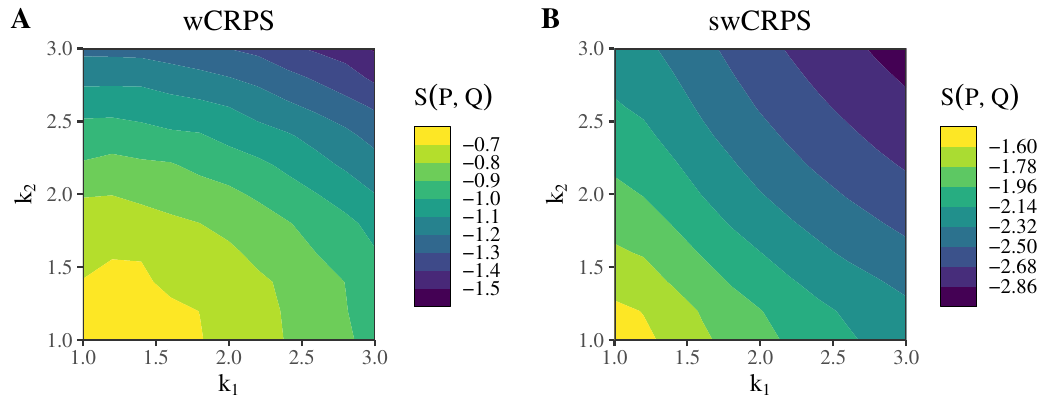}
        \caption{Simulated expected score $S(\p,\Q)$ using wCRPS and swCRPS for a pair $(X_1,X_2)$ of random variables with $X_i\sim GEV(\mu_i,\sigma_i,\gamma)$, as functions of $k_i$, $i=1,2$ using a prediction that has the correct location parameters, $\mu_i$, and scale parameters $\widehat{\sigma_i}=k_i\sigma_i$, $i=1,2$. For the true model, $\mu_1=\mu_2=0$, $\gamma=0.12$, $\sigma_1 = 1.5$ and $\sigma_2 = 3$. The weight function was chosen as the 0.90 quantile for each score.}.
        \label{fig:res:sim:escore}
\end{figure}

As mentioned above, one often evaluates forecasts at multiple locations though average scores. If the individual scores are proper, their average is also a proper score. However, if the distributions at these locations have different scales, the use of a scale dependent scoring rule leads to an implicit ranking of the importance of the different locations, in the sense that some locations might be more important to predict well to get a high average score.  

As an example, consider two observations following GEV distributions that differ only in scale, $X_i\sim \Q_{\thetab_i}=GEV(0,\sigma_i,0.12)$, $i=1,2$, with $\sigma_1=1.5$ and $\sigma_2=3$. Using the predictions $\p_i=GEV(0,\widehat{\sigma_i},0.12)$, where $\widehat{\sigma_i}$ is an estimated scale parameter, the paired observations are scored using the scoring rule $S(\p,y)=\frac{1}{2}\left(S_1(\p_1,y_1)+S_2(\p_2,y_2)\right)$, where $S_1$ and $S_2$ are either wCRPS scores,  or else swCRPS scores with  indicator weight functions $w_{q_1}$ and $w_{q_2}$ for the two observations. 
Here $q_1$ and $q_2$ are chosen as the 90th quantiles of $\Q_{\thetab_1}$ and $\Q_{\thetab_2}$ respectively. The expected score was computed as the mean of 100.000 simulations.

Figure \ref{fig:res:sim:escore} shows that the expected score $S(\p,\Q)$ with $\widehat{\sigma_i}=k_i\sigma_i$ for wCRPS is more sensitive to changes in $k_2$ than in $k_1$, meaning that it is more important to have a good prediction for $X_2$. 
For the swCRPS, the scores are on the other hand quite symmetrical in $k_1,k_2$, so the predictions $\p_1,\p_2$ are scored more equally. Figure \ref{fig:res:sim:escore} further shows that the swCRPS is close to being locally scale invariant while the wCRPS is not. Finally, note that the symmetry in $k_1,k_2$ for all considered values indicates that the score difference does not depend on the scale even for large model misspecifications, which is not guaranteed to hold by the definition of local scale invariance. This is not uncommon, and for example also holds for Gaussian distributions as can be seen in \citet[][Figure 1]{Bolin2022LocalRulesCustom}.

\section{Case studies}\label{sec:results:casestudy}

This section uses data on water levels, precipitation, and air pollution to compare scoring rules in practice. The models used were fitted by maximising the log likelihood in the first two case studies and through the INLA approach~\citep{inla2009} in the final study.

\subsection{Extreme water levels}

In this section, we consider data containing annual maximal water levels at five representative stations in the Great Lakes; Lake St. Clair, Lake Michigan-Huron, Lake Ontario, Lake Superior, and Lake Erie. The dataset is provided by NOAA and dates from 1918 to 2020. With different sizes and depths of the lakes, the data differs in scale, making it interesting to see the effect of misspecifications when using wCRPS and swCRPS.

First, a stationary GEV model was fitted to the five representative stations, assuming that the behaviour of the lake-wide average water levels has not changed during the observed time series. The estimated parameters are listed in Table~\ref{tab:great:lakes:gev:fit}. The shape parameter at all locations is negative, suggesting that the annual maxima follow a Weibull distribution. The density functions of the estimated GEV distributions are shown in Figure \ref{fig:pdf:gev:lakes} in the appendix. Next, the $PGEV_\lambda$ model, with a trend in $\lambda$ and temperature as covariate, was fitted to each of the stations. For further information on the PGEV model, see  \citet{helgak}. The temperature used was a lowess smoothing of the yearly average Northern Hemispheric temperature, obtained from NOAA~\citep{NOAANationalCentersforEnvironmentalInformation2019ClimateSeries}. The fit suggested that Lake Michigan-Huron and Lake Superior did not have a trend in the expected number of extreme water levels, $\lambda$, but trends existed in the other three lakes, Lake St. Clair, Lake Ontario and Lake Erie, using significance level $\alpha=0.05$. From the Lake System profile, the stationarity might be explained by flow from both Lake Michigan-Huron and Lake Superior to the other lakes. The number of dams in the great lakes exceeds 7.000 so the effect of them on the data is hard to visualise.

\begin{table}[t]
    \centering
    \caption{Parameter estimates (standard deviations) using stationary GEV distribution.}\label{tab:great:lakes:gev:fit}
    \begin{tabular}{rccccc}
    \toprule
 Lake & $\mu$ & $\sigma$ &$\gamma$ & station id\\\hline
St. Clair &175.108 (0.038) &  0.349 (0.027)& -0.285 (0.065) & 1\\
Michigan-Huron &176.469 (0.044)& 0.395 (0.033)& -0.283 (0.082)& 2\\
Ontario & 74.990 (0.034)& 0.322 (0.024)& -0.285 (0.053)& 3\\
Superior &183.524 (0.019)&  0.175 (0.014)& -0.404 (0.063)& 4\\
Erie&174.280 (0.038)&0.355 (0.027)& -0.348 (0.060)& 5\\
\bottomrule
    \end{tabular}
\end{table}

In \citep{glerlnoaa} it is noted that ``since September 2014, all of the Great Lakes have been above their monthly average levels for the first time since the late 1990s''. However, they come to the same conclusion, that probably the Lake Superior and Lake Michigan-Huron will remain stationary around the mean levels while there might be non-stationarity in the remaining three lakes. 

The above parameter estimates are used through simulation to compare the effect of parameter misspecification on the mean and standard deviation of the score differences. Further, since the stationary stations, Lake Superior and Lake Michigan-Huron are also the ones that differ most in scale, these are used in the simulation study to compare how often one misspecified model is preferred over another equally misspecified model using different scores.

\subsubsection{Simulation}

Using the estimated parameters from Table \ref{tab:great:lakes:gev:fit}, we simulate observations from the five different stations. For each station, 1000 independent stationary time series of length 100 are simulated. These time series have been scored for two models. The first model uses the true parameters $\mu,\sigma,\gamma$ from the simulation and the second model uses the true location and shape parameters $\mu$ and $\gamma$, but perturbs the scale parameter by a factor $k=1.5$. The scale parameter of the second model therefore becomes $1.5\sigma$. We see that for the parameters from station $4$, Lake Superior, both the mean score difference and its variation is smaller than for the other stations when using wCRPS (Figure~\ref{fig:mean:scores:lakes}). This difference between stations disappears when instead using swCRPS on the same data. 

For a given perturbation factor $k$, the score difference at station $i$ is defined as
$\Delta_i:=S_{\sigma_i}-S_{k\sigma_i}$
where $S_{\hat{\sigma}}$ is the score using estimated scale parameter $\hat{\sigma}$, and $\sigma_i$ is the true scale parameter at station $i$ (i.e., the parameter that was used when simulating the data). The factor $k$ describes the estimated scale parameter's error relative to the true scale parameter. For scale invariant scores, the expected difference will only depend on the perturbation factor and not on the scale itself.

\begin{figure}[t]
    \centering
    \includegraphics[width=0.8\textwidth]{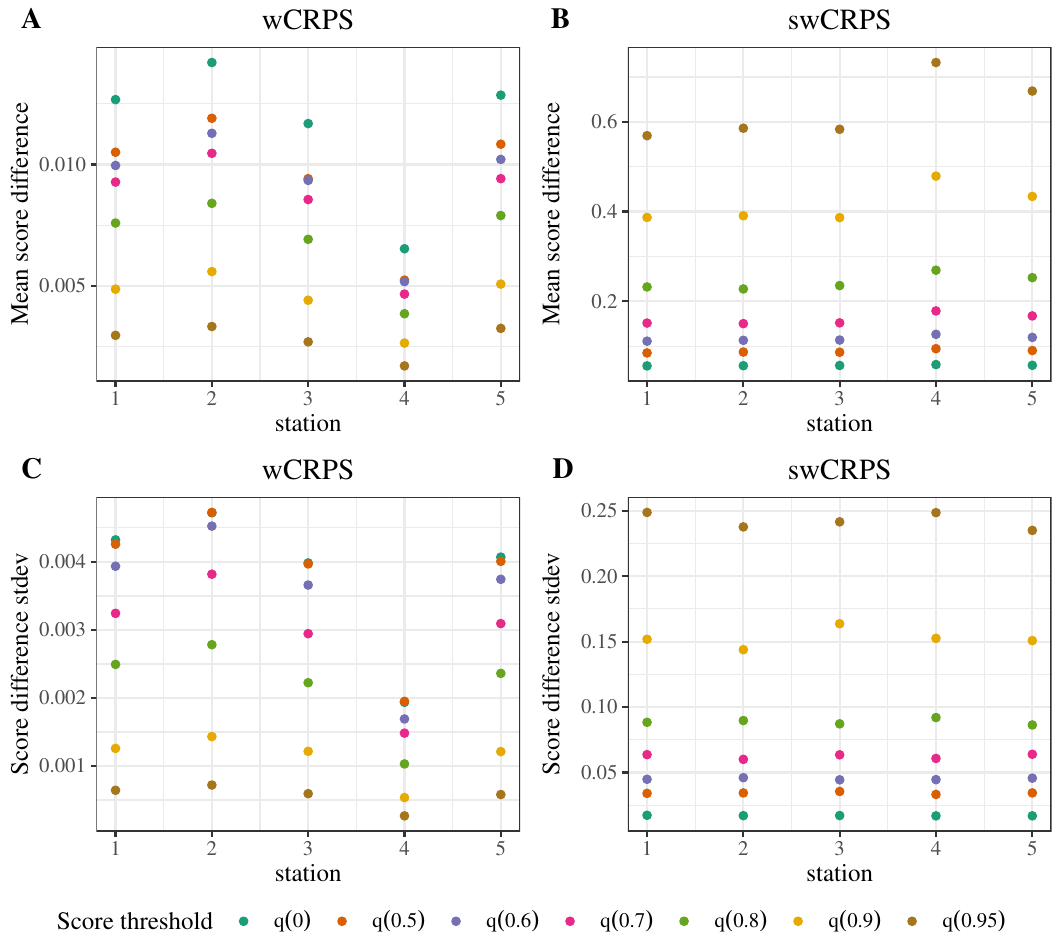}
    \caption{Mean and standard deviations of score differences, $\Delta_i$, at stations $i\in\{1,2,3,4,5\}$ with $k=1.5$, using different types of CRPS scores on simulated data using the estimated parameters from the stations listed in Table \ref{tab:great:lakes:gev:fit}}
    \label{fig:mean:scores:lakes}
\end{figure}

Assume we want to compare two model predictions, each for two locations: Lake Michigan-Huron with $\sigma_2=0.395$ and Lake Superior with $\sigma_4=0.175$. Denote the models A and B, where
\begin{equation*}
    \begin{aligned}
\text{Model A:} &\; \hat{\sigma}_2=\sigma_2,\;\hat{\sigma}_4=k\sigma_4\\
\text{Model B:} &\; \tilde{\sigma}_2=k\sigma_2,\;\tilde{\sigma}_4=\sigma_4        
    \end{aligned}
\end{equation*}
and $k>0$ is a perturbation factor.

We want to compare score A against score B, i.e.,
$S_A:=S_{\hat{\sigma}_2}+S_{\hat{\sigma}_4}$
against
$S_B:=S_{\tilde{\sigma}_2}+S_{\tilde{\sigma}_4}$.
However, $S_A-S_B=\Delta_2-\Delta_4$
so it suffices to compare the differences $\Delta_i$ at the stations. For this, 1000 time series of length 100 were simulated using the estimated parameters of Lake Superior and Lake Michigan-Huron. The perturbation factor was fixed at $k=1.5$. Figure \ref{fig:sup:michhur} shows how often model A was preferred over model B using different scores. Since the models both have one correct scale parameter, and the same proportional error in the other parameter, the models A and B should be chosen with equal probability if not influenced by scale. This corresponds to choosing A over B $50\%$ of the time. Using the wCRPS gives a high bias toward model $A$ while the swCRPS yields a more fair comparison of models $A$ and $B$, as expected. The deviation from the 0.5 line for large thresholds for the swCRPS is due to the difference in shape parameters at the simulated stations. If we assume that the shape parameters are the same for the two stations, this deviation disappears.

\begin{figure}[t]
    \centering
    \includegraphics[width=0.8\textwidth]{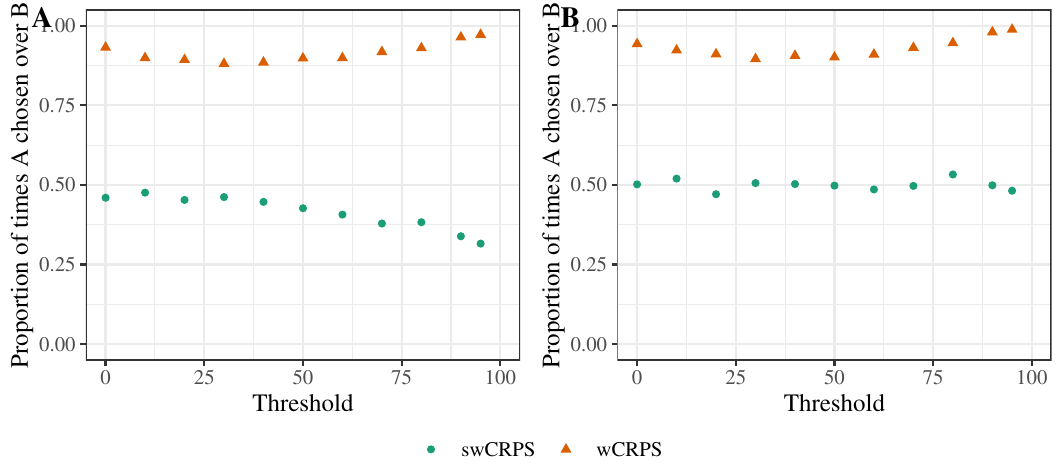}
    \caption{Proportion of times model A was preferred over model B when simulating time series of length 100 when using wCRPS and swCRPS, with individual shape parameter (left) and joint shape parameter (right). The score threshold is chosen as the $p\%$ quantile.}
    \label{fig:sup:michhur}
\end{figure}

\subsection{Extreme rainfall events and climate change}

\begin{figure}
     \centering
     \includegraphics[width=0.8\textwidth]{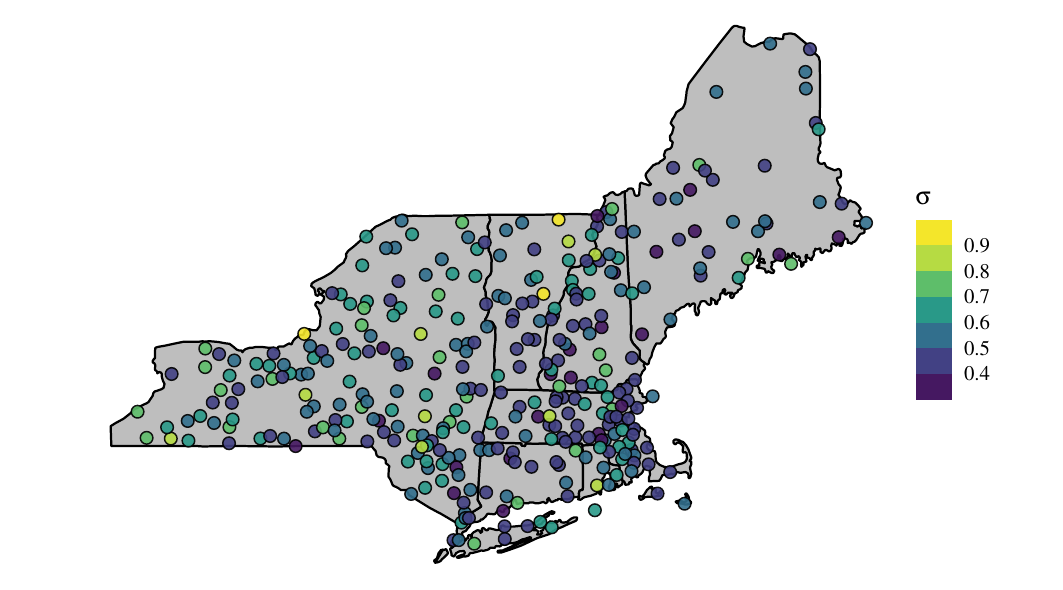}
     \caption{Estimated scale parameter from a $PGEV$ model with trend in the frequency parameter $\lambda$ for stations in the Northeastern U.S.}\label{fig:map:sigma}
\end{figure}

In this section we use the  scoring rules introduced above to compare five models for extreme rainfall in the Northeastern United States. The dataset is a part of NOAA Atlas 14 Volume 10, part 3 \citep{noaatsd} and contains the annual maximum rain from 685 stations in the Northeast with time series ranging from approximately 1900 to 2014. Only stations with at least 60 years of data are used. We call this data set the NE-data. The estimated scale parameters differ between stations (Figure \ref{fig:map:sigma}), meaning that the stations with higher scale parameters will be given greater weight if a scale dependent scoring rule such as CRPS and wCRPS is used.

\subsubsection{Model and inference details}

The four models for annual maxima rainfall compared are (i) Gumbel: a GEV distribution with shape parameter zero, (ii) GEV: a GEV distribution with shape parameter different from zero, (iii) GEV$_\mu$: a GEV distribution with trend in the location parameter, i.e. $\mu(t)=\mu_0+\mu_1 t$, and (iv) PGEV$_\lambda$: a PGEV distribution with trend in the frequency parameter, i.e. $\ln\lambda(t)=\lambda_0+\lambda_1 t$. For both GEV$_\mu$ and PGEV$_\lambda$, the covariate $t$ is a lowess smoothing of the yearly average Northern Hemispheric temperature, obtained from NOAA~\citep{NOAANationalCentersforEnvironmentalInformation2019ClimateSeries}.

The parameters of the GEV and the PGEV models are estimated by maximising the log-likelihood. 
Both models assume a single regional shape parameter and station-wise location and scale parameters. The estimated scale parameter varies over space, motivating the use of locally (tail-) scale invariant scores. The thresholds for the weighted scores are determined empirically from the time series as the $p$-th quantile.

Besides overall mean scores, we also consider the distribution of scores. For station $i$, with $N_i$ observations ${\bm y}=(y_1,...,y_{N_{i}})$, the mean score is
\begin{equation*}
    S_i(\p,{\bm y})=\frac{1}{N_i}\sum_{j=1}^{N_i}S(\p_{ij},y_j).
\end{equation*}
where $\p_{ij}$ is a predictive distribution for observation $j$ at station $i$. The station-wise score difference between two predictions $\p$ and $\Q$, at station $i$, is denoted
${\Delta_i(\p,\Q)=S_i(\p,{\bm y})-S_i(\Q,{\bm y})}$.
A positive difference means that prediction $\p$ scored better, and a negative difference means that prediction $\Q$ scored better. 
The standard way to evaluate if $\p$ is a better model than $\Q$ for time series is to perform a Diebold-Mariano test~\citep{Diebold1995ComparingAccuracyb} on difference of this kind, that takes into account correlation in the data. However, for the station-wise differences, the spatial correlation is low, and we therefore  perform pairwise t-test of equal predictive performance to compare two models. 

The variances of the station-wise average scores are affected by the varying scale parameters in the data, but also by the lengths of the time series, which are slightly different for the different stations. However, since only time series of length greater than 60 years are used, the latter difference is small.

\subsubsection{Results}

\begin{table}[tb]
    \centering
    \caption{Mean scores for four different models for NE-data, with means obtained as the mean of the mean scores at the individual stations. The scores where t-test showed significance between the PGEV$_\lambda$ and GEV$_\mu$ are shown in bold.}
    \label{tab:neusa:scores}
    \resizebox{\linewidth}{!}{
    \begin{tabular}{cccccccccc}
    \toprule
         &  \multirow{2}{*}{LS} & \multirow{2}{*}{CRPS} & \multirow{2}{*}{SCRPS} & \multicolumn{2}{c}{LS$_q$} & \multicolumn{2}{c}{wCRPS} & \multicolumn{2}{c}{swCRPS}\\
         \cmidrule(r){5-6}\cmidrule(r){7-8}\cmidrule(r){9-10}
         &&&&$90\%$&$99\%$&$90\%$&$99\%$&$90\%$&$99\%$\\
         \hline
      Gumbel & -1.148&-0.483 &-1.048 & -0.427 &-0.0903 & -0.0865 &  -0.010178 & -0.1556&-1.531 \\
      GEV & -1.129&-0.481 &-0.966 &-0.419& -0.0856 &-0.0860& -0.010201 & -0.0795&0.893 \\
      GEV$_\mu$ &-1.108&-0.473&-0.958& -0.415&-0.0850 & -0.0857&-0.010199 & -0.0644&0.906 \\
      PGEV$_\lambda$&{\bf -1.107} & {\bf -0.472} & {\bf -0.956} &  {\bf -0.414}& {\bf -0.0847}& {\bf -0.0854}& -0.010195& {\bf -0.0603}& 0.901\\
  \bottomrule
    \end{tabular}}
\end{table}

As shown in Table~\ref{tab:neusa:scores}, 
the PGEV$_\lambda$ model gives the highest overall mean scores, while the Gumbel model has the lowest score, for all statistically significant scores.
Figure \ref{fig:res:ne:ttest} shows  p-values from two-sided t-tests, where one can note that all scoring rules reject the null hypothesis except for very high thresholds, where only the LS$_q$ score rejects.

\begin{figure}[t]
     \centering
    \includegraphics[width=0.8\textwidth]{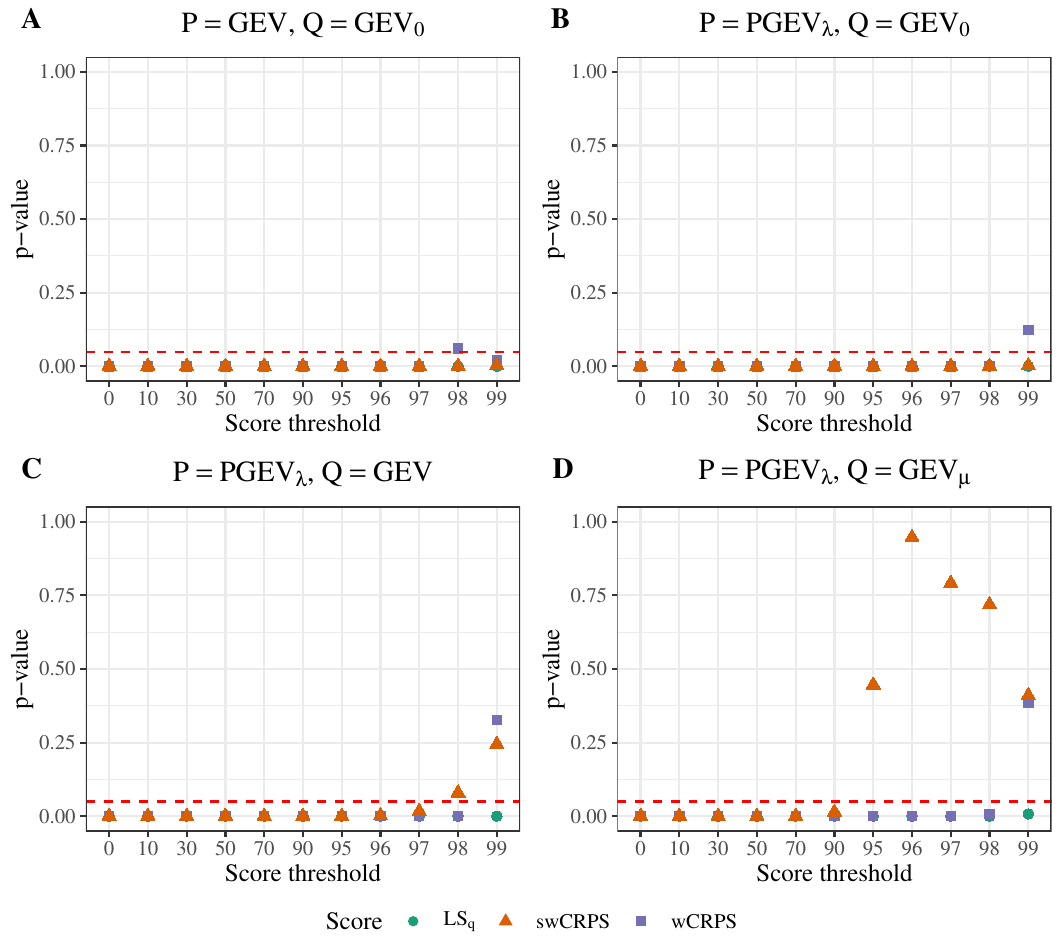}
        \caption{P-values from t-test station-wise score differences $\Delta_i(\p,\Q)$ for predictions $\p$ and $\Q$, where A) $\p=GEV$, $\Q=GEV_{\gamma=0}$, B) $\p=PGEV_\lambda$, $\Q=GEV_{\gamma=0}$ , C) $\p=PGEV_\lambda$, $\Q=GEV$ and D) $\p=PGEV_\lambda$, $\Q=GEV_\mu$. The 0.05 level is marked with a red dashed line. Note that the y-scales are different in the different plots.}
        \label{fig:res:ne:ttest}
\end{figure}

Instead of only considering average  scores, one can, as suggested by \citet{TAILLARDAT2022}, also check for the existence of trends  by permuting  covariates (in this paper temperature) and computing prediction scores for the model which uses the permuted covariates instead of the ordered ones. In the absence of trends, these predictions should be similar to the ones which use the ordered covariates. By plotting the ordered average station scores from the original station data $\mathcal{D}_1=(t_i,y_i)_{i=1}^n$ against the ordered average scores obtained by using the permuted covariates $\mathcal{D}_2=(t_{\pi(i)},y_i)_{i=1}^n$, one can see if the scores from $\mathcal{D}_1$ are larger, which would point at the existence of a trend, or if they distribute evenly around the $45$ degree line, in which case trend is unlikely. For the PGEV model with trend in frequency, Figure \ref{fig:res:ne:permuted} suggests the existence of a trend regardless of which scoring rule is used. Here, the same conclusion is reached regardless of score chosen. In particular, the rankings of the models do not change if we focus only on high values. Whether this is the case is application dependent, and Section~\ref{sec:pm10} shows an example where the conclusions differ when only focusing on high values.

\begin{figure}[tb]
     \centering
 \includegraphics[width=0.9\textwidth]{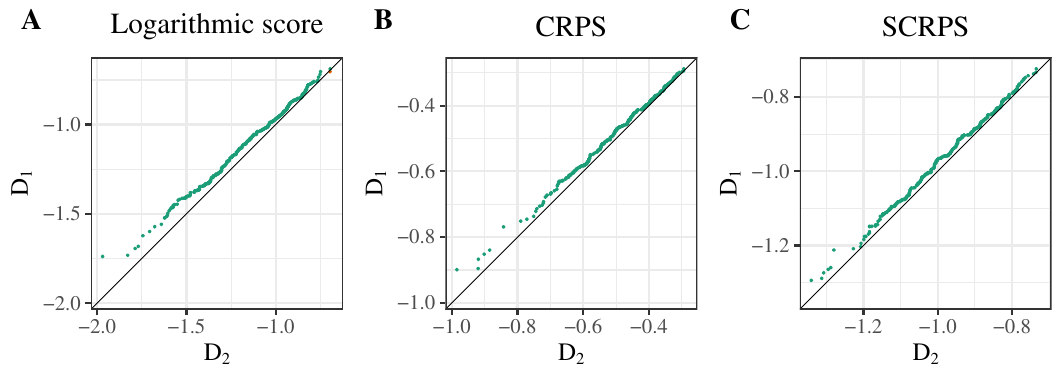}
        \caption{Sorted average station scores for a PGEV model with trend in $\lambda$ plotted for original dataset $\mathcal{D}_1$ against permuted dataset $\mathcal{D}_2$. The colour represents if the points fall above (green) or below (orange) the diagonal. }
        \label{fig:res:ne:permuted}
\end{figure}

\subsection{Particle matter concentration}\label{sec:pm10}

\begin{figure}[tb]
     \centering
  \includegraphics[width=0.9\textwidth]{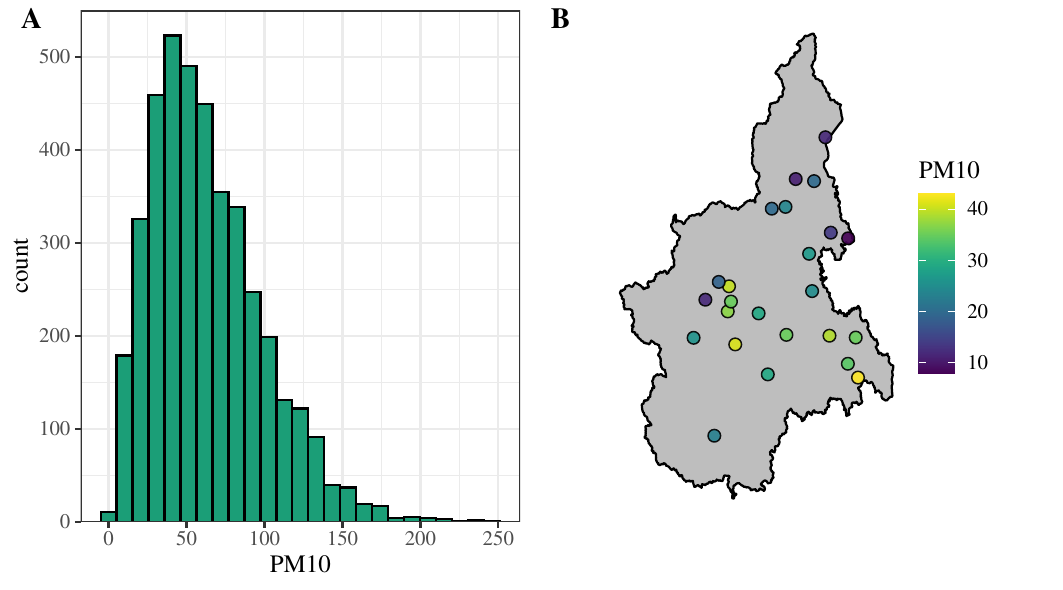}
        \caption{The position of the 24 stations in the Piemonte region along with observations of PM10 concentration on March 13th 2006 (right), and a histogram showing the observations from the 24 stations at the 182 different dates (left).}
        \label{fig:piemonte_data}
\end{figure}

So far, the examples have been focused on extremes, due to the natural interest in predictions above a certain threshold. However, there are many cases where there is a reason to focus on specific regions which are not necessarily extreme. For example, high concentration of particle matter of size less than 10 nm in diameter (PM10) have negative health impact. According to WHO \citep{whoairquality}, the daily PM10 should not exceed $45\text{ }\mu\text{g}/\text{m}^3$. Moreover, based on studies on the connection between long-term PM10 and non-accidental mortality, they found that the long-term level of PM10 should not exceed $15\text{ }\mu\text{g}/\text{m}^3$. This is yet to become a recommendation, but it might be of interest to evaluate how well models can predict PM10, focusing on values above $15\text{ }\mu\text{g}/\text{m}^3$. The assumption is that PM10 concentrations below $15\text{ }\mu\text{g}/\text{m}^3$ have less health implications, and thus the prediction of lower values not as important for the model evaluation. As an illustration, we focus on concentrations above three distinct thresholds; $15\text{ }\mu\text{g}/\text{m}^3$, $30\text{ }\mu\text{g}/\text{m}^3$, and $45\text{ }\mu\text{g}/\text{m}^3$ when comparing predictions of PM10 concentrations in the Piemonte region in Italy. We use the data from \citet{cameletti2013spatio}, containing observations for 182 days from October 2005 to March 2006 at 24 locations. The data is illustrated in Figure~\ref{fig:piemonte_data}.

Denoting the $n=4368$ observations of log-PM10 by $y_i, i=1,\ldots, n$,  \citet{cameletti2013spatio} proposed using a model of the form 
\begin{equation}\label{eq:spacetime}
    y_{i}= \sum_{m=1}^M W_m(s_i,t_i) \beta_m +u(s_i,t_i) + e_i.
\end{equation}
Here, $W_m$ are covariates, $\beta_m$ regression coefficients, and $e_i$ are independent centred Gaussian variables representing measurement noise. Further, $u(s,t)$ is a spatio-temporal Gaussian random field with a separable covariance function.

Since separability is rarely a realistic assumption for spatio-temporal statistical models, we compare the predictions based on this model with those of a model where $u$ is replaced by a Gaussian random field with a non-separable covariance function. Specifically, we use the ``critical diffusion'' model model of \citet{lindgren2023diffusionbased}, which is a Gaussian random field defined as a solution to a stochastic partial-differential equation (SPDE). Both models are fitted to the data using the  code provided in \citep{PiemonteCode}.

To compare the models in terms of predictive performance, a one step ahead cross-validation was preformed by extending the cross-validation function of the R package rSPDE \citep{rspdepackage} to include swCRPS and wCRPS, with the test set at fold $i$ defined as all observations at time $t_{i+1}$, and the test set as all observations up to that time, for $i=1,...,181$. The predictions are compared in terms of CRPS, SCRPS, wCRPS and swCRPS. The results of the best scores are shown in Table \ref{tab:piemonte}. In line with the results found in \citep{lindgren2023diffusionbased}, the CRPS and SCRPS indicate that the non-separable model has better predictive performance than the separable model. However, for the weighted scores, the wCRPS indicates a better performance of the non-separable model, while the swCRPS indicates a better performance of the separable model. Thus, this application is an example of where the choice of scoring rule affects the rankings, and we in particular see that the swCRPS indicates that the separable model might be preferable when focusing on harmful levels of PM10.

\begin{table}
    \centering
    \caption{Mean scores from one-step ahead cross-validation used to compare PM10 predictions in the Piemonte data for  separable and non-separable space-time models as in~\cite{PiemonteCode}. The higher score is represented in bold. The thresholds used for the weighted scores are $15\text{ }\mu\text{g}/\text{m}^3$, $30\text{ }\mu\text{g}/\text{m}^3$, and $45\text{ }\mu\text{g}/\text{m}^3$.}
    \label{tab:piemonte}
    \resizebox{\linewidth}{!}{
    \begin{tabular}{ccccccccccc}
    \toprule
          & \multirow{2}{*}{CRPS} & \multirow{2}{*}{SCRPS} &  \multicolumn{3}{c}{wCRPS} & \multicolumn{3}{c}{swCRPS}\\
         \cmidrule(r){4-6}\cmidrule(r){7-9}
         &&&$15\text{ }\mu\text{g}/\text{m}^3$&$30\text{ }\mu\text{g}/\text{m}^3$&$45\text{ }\mu\text{g}/\text{m}^3$&$15\text{ }\mu\text{g}/\text{m}^3$&$30\text{ }\mu\text{g}/\text{m}^3$&$45\text{ }\mu\text{g}/\text{m}^3$\\
         \hline
     Separable & -0.2243 & -0.5792 & -0.1987 &  -0.1600 & -0.1151 & \textbf{-0.5516} & \textbf{-0.4960}  & \textbf{-0.1452}\\
 Non-separable & \textbf{-0.2232} & \textbf{-0.5750}  & \textbf{-0.1983} & \textbf{-0.1590}& \textbf{-0.1590} & -0.5863  & -1.1672  & -0.2162\\
  \bottomrule
    \end{tabular}}
\end{table}

\section{Conclusion and discussion}\label{sec:conclusion}

Desirable properties such as straightforward computation of scores by Monte Carlo approximation have made the CRPS a popular alternative to the logarithmic score. However, using the CRPS requires sacrificing the local scale invariance of the logarithmic score. 
This can cause different observations, say at different spatial location,  not to be equally important for average scores. To account for this, a common choice in the literature is to use so called skill scores
$(S_n-S_n^{ref})/S_n^{ref}$ or
$(S_n-S_n^{ref})/(S_n^{opt}-S_n^{ref})$,
where $S_n^{ref}$ and $S_n^{opt}$ are the scores for a reference method and the hypothetical optimal score, respectively. However, these scores are often improper, even when they are based on a proper scoring rule \citep{Gneiting2007}. 

\citet{Bolin2022LocalRulesCustom} addressed this issue by introducing the SCRPS, a proper scoring rule which retains most of the desirable properties of the CRPS but in addition also is locally scale invariant.
We extend this construction to the weighted CRPS to obtain the swCRPS and investigate the properties of this score. We show that neither the wCRPS nor the swCRPS are scale invariant, but that the swCRPS is locally tail-scale invariant, meaning that conditioned on exceeding the threshold used in the weighting, the score is locally scale invariant.

Through simulation studies we showed that for a number of extreme value models swCRPS works in a similar way as the SCRPS does for full data sets. In particular, the mean and variability of score differences remain similar when scale parameters are changed, and average scores over random variables with different scales keep the relative error in scale parameter estimation approximately equal.
These properties can be important, for example when evaluating weather models that operate on different scales at different spatial locations, such as temperature, water levels, or rainfall amounts at different measuring stations.

As an example, we used different scores to evaluate five models on annual maximum rainfall data in the Northeastern USA. The comparison led to similar conclusion for all scores, with the best tested model being a PGEV model with a temperature-wise trend in the frequency parameter, meaning that for this case study we come to the same conclusion when only considering the tail of the distribution. This was not the case for the particle matter application, where focusing on high values changed the model ranking.

It should be noted that scale dependence can sometimes be preferred, as mentioned in \citep{Bolin2022LocalRulesCustom}. In such situations, the wCRPS would likely be a better choice than the swCRPS. However, we believe that such situations are rare for typical applications in extremes. Moreover, the $LS_q$ seems to have the best power, as expected, but this score might not be possible to compute for more complicated models. In such cases the swCRPS can be a good alternative.

Issues regarding tail weighted scores have been raised as the forecaster's dilemma \citep{Lerch2017}. The lack of information from observations below a certain threshold can unintentionally draw the score towards the wrong model based on the weight of the models' tail behaviour. This has been addressed by also including the unweighted CRPS by using a weight function of the form $a+bw_u(x)$~\citep{THORARINSDOTTIR2018155}. Another solution might be to simply not choose a ``too high'' threshold $u$.

Weighted scores are not solely relevant for extremes but can be applied whenever predictions within a specific region are of higher importance, as seen for the PM10 concentration model, where a Gaussian spatio-temporal model was used to predict the concentration of PM10 in the Piemonte region of Italy \citep{cameletti2013spatio,lindgren2023diffusionbased}. Moreover, the regions of interest can be defined through other weight functions than the indicator weight function $w_u(y)=1\{y\geq u\}$. The notation of local weight-scale invariance extends the concept of local tail-scale invariance to accommodate for local scale invariance within these types of regions.

\newpage

\appendix

\section{Extreme value models and scoring rules}\label{sec:models}

When considering block maxima data, such as annual maxima of daily precipitation data,  classical extreme value theory \citep[see e.g.][]{Coles2001} shows that the Generalised Extreme Value (GEV) distribution is a suitable model for the data. Different variations of this approach exist, as shown below. 

\subsection{Extreme value theory models}\label{sec:evt:models}
 The GEV distribution has location, scale and shape parameters $\mu,\sigma >0$, and $\gamma$ respectively, and distribution function
\begin{equation*}
    F_{GEV}(x)=\begin{cases}\exp\left(-\exp\left(-\frac{x-\mu}{\sigma}\right)\right)&\text{if }\gamma=0,\\
    \exp\left\{-\left(1+\frac{\gamma}{\sigma}(x-\mu)\right)^{-1/\gamma}\right\}&\text{if }\gamma\neq0.
    \end{cases}
\end{equation*}
In this model, one can introduce trends in the parameters, for example  to model climate change. 
Further, one can use a reparameterisation, PGEV, of the GEV distribution in terms of  parameters $\lambda,\sigma_u,\gamma,u$, with
\begin{equation*}
    \mu_\text{PGEV}=u+\frac{(\lambda^\gamma-1)\sigma_u}{\gamma},\qquad\sigma_{PGEV}=\sigma_u\lambda^\gamma,\qquad\gamma_{PGEV}=\gamma.
\end{equation*}
Here the parameter $\lambda$ describes the frequency of exceedances of a high level $u$, and $\sigma_{PGEV}$ is a scale parameter of the distribution of sizes of excesses of $u$. A reason for this reparameterisation is that it gives a clearer physical understanding of the behaviour of extreme events. The distribution function of the PGEV can be written as
\begin{equation*}
    F_{PGEV}(x)=\exp\left\{-\lambda\left(1+\frac{\gamma}{\sigma_u}(x-u)\right)^{-1/\gamma}\right\}.
\end{equation*}
For details and more information, see~\citet{helgak}. For models without trends in parameters, the PGEV distribution is the same as the GEV distribution. However, the trends in $\lambda$ for the PGEV model and in $\mu$ for the GEV model behave differently. 

When $\p$ is the GEV or PGEV distribution, and $\gamma<1$, the CRPS has a closed form expression \citep[see e.g.][]{Friederichs2012},
\begin{equation*}
    CRPS(\p,y)=
    \begin{cases}
    -(y-\mu+\frac{\sigma}{\gamma})(2F(y)-1)+\frac{\sigma}{\gamma}(2^\gamma \Gamma(1-\gamma)-2\Gamma_l(1-\gamma,-\ln F(y))) & \text{if $\gamma \neq 0$,}\\
    (y-\mu)+2\sigma Ei(\ln F(y)) - \sigma(C-\ln 2) & \text{if $\gamma = 0$,}
    \end{cases}
\end{equation*}
where $\Gamma(a)=\int_0^\infty t^{a-1}e^{-t}dt$ is the gamma function, $\Gamma_l(a,\tau)=\int_0^\tau t^{a-1}e^{-t}dt$ is the lower incomplete gamma function, $Ei(x)=\int_{-\infty}^x e^t/t dt$ is the exponential integral, and $C$ is the Euler-Masceroni constant. For $\gamma\geq 1$, the CRPS does not exist. In Appendix~\ref{sec:appendix:expressions} we derive the corresponding closed-form expressions for the wCRPS.

Through the following reformulation of Eq. \eqref{eq:score:swcrps}, 
\begin{equation*}
\begin{aligned}
    swCRPS(\p,y) &=&& \frac{wCRPS(\p,y)}{E_{\p,\p}[g_w(X,X')]}-\frac{1}{2}\ln(E_{\p,\p}[g_w(X,X')]) - \frac{1}{2},
\end{aligned}
\end{equation*}
we can also evaluate $swCRPS(\p,y)$ for the GEV distribution through the closed form expression for the wCRPS in combination with the expression
\begin{equation*}
\begin{aligned}
    E_{\p,\p}[g_w(X,X')] &= -2\left(q-\mu+\frac{\sigma}{\gamma}\right)F(q)(1-F(q))\\
    &\quad+2\frac{\sigma}{\gamma}\left[2^{\gamma}\Gamma_l(1-\gamma,-2\ln(F(q)))-2\Gamma_l(1-\gamma,-\ln(F(q)))\right].
\end{aligned}
\end{equation*}
See Appendix~\ref{sec:appendix:expressions} for the derivation of these expressions.

\section{Closed form expressions for GEV scores}\label{sec:appendix:expressions}
Let $\p$ be a distribution with CDF $F$ and PDF $f$, and let $w_q$ be the indicator weight function as described in Eq. (\ref{eq:qwf}). Then
\begin{align*}
    E_{\p}&[g_w(X,y)]=\int_{\R} g_w(x,y)f(x)dx\\
    &=(1-w_q(y))\int_q^\infty (x-q)f(x)dx + w_q(y)\left(\int_{-\infty}^q (y-q)f(x)dx+\int_q^\infty|x-y|f(x)dx\right)\\
    &=\int_q^\infty xf(x)dx-q(1-F(q))+
    w_q(y)\left(y(2F(y)-1)-q(2F(q)-1)-2\int_q^yxf(x)dx\right),
\end{align*}
and
\begin{equation}
\begin{aligned}
    E_{\p\p}[g_w(X,X')]&=\int_\R\int_{\R} g_w(x,y)f(x)dxf(y)dy\\
    &=4\int_q^\infty xF(x)f(x)dx-2\int_q^\infty xf(x)dx-2qF(q)(1-F(q)).
\end{aligned}\label{eq:expected:difference:WX:appendix}
\end{equation}
Using the formulation in Eq. \eqref{eq:wCRPS}, we have
\begin{equation}\label{eq:wCRPS:appendix}
\begin{aligned}
w&CRPS(\p,y)=\frac{1}{2}E_{\p\p}F[g_w(X,X')]-E_{\p}[g_w(X,y)]\\
     &=2\int_q^\infty xF(x)f(x)dx-\int_q^\infty xf(x)dx-qF(q)(1-F(q)) -\int_q^\infty xf(x)dx+q(1-F(q))\\
     &\quad-w_q(y)\left(y(2F(y)-1)-q(2F(q)-1)-2\int_q^yxf(x)dx\right)\\
     &=-2\int_{q\vee y}^\infty xf(x)dx+2\int_q^\infty xF(x)f(x)dx-(q\vee y)(2F(q\vee y))-1)+qF(q)^2.
\end{aligned}
\end{equation}
The expressions in Eq. (\ref{eq:expected:difference:WX:appendix}) and (\ref{eq:wCRPS:appendix}) can be rewritten in terms of quantiles of the distribution $F$, since
$\int_{a}^\infty xf(x)dx=\int_{F(a)}^1 F^{-1}(t)dt$ and 
$\int_a^\infty xF(x)f(x)dx=\int_{F(a)}^1 F^{-1}(t)tdt$.
This gives
\begin{equation}\label{eq:expected:difference:WX:appendix:quantile}
    E_{\p\p}[g_w(X,X')]=4\int_{F(q)}^1 F^{-1}(t)tdt-2\int_{F(q)}^1 F^{-1}(t)dt-2qF(q)(1-F(q)).
\end{equation}
and
\begin{equation*}
\begin{aligned}
    wCRPS(\p,y)&=-2\int_{F(q\vee y)}^1 F^{-1}(t)dt+2\int_{F(q)}^1 F^{-1}(t)tdt-(q\vee y)(2F(q\vee y))-1)+qF(q)^2.
\end{aligned}
\end{equation*}
Let us now use these results to evaluate the scoring rules for the GEV distribution. We first consider the case $\gamma=0$, corresponding to the Gumbel distribution, and then the case $\gamma\neq 0$.

\subsection{Scoring rules for the Gumbel distribution}
When $\gamma = 0$, the GEV distribution has CDF $F(x) = \exp(-\exp(-(x - \mu)/\sigma))$ with inverse $F^{-1}(x) = \mu - \sigma\ln(-\ln x))$. Therefore,
\begin{align*}
    \int_{F(q\vee y)}^1 F^{-1}(t)dt&= \mu(1-F(q\vee y))-\sigma\int_{F(q\vee y)}^1 \ln(-\ln t)dt,\\
    \int_{F(q)}^1 F^{-1}(t)tdt &= \frac{1}{2}\mu(1-F(q)^2)-\sigma\int_{F(q)}^1 t\ln(-\ln t)dt,
\end{align*}
which gives
\begin{equation*}
\begin{aligned}
    wCRPS(\p,y)&=&&((q\vee y)-\mu)-\sigma(C-\ln(2))\\
    &&&-\sigma\left(Ei\left(-2\exp\left(-\frac{q-\mu}{\sigma}\right)\right)-2Ei\left(-\exp\left(-\frac{q\vee y-\mu}{\sigma}\right)\right)\right),
\end{aligned}
\end{equation*}
and
\begin{equation*}
\begin{aligned}
    E_{\p\p}[g_w(X,X')]&= 2\sigma\ln(2)-2(q-\mu)(1-F(q))\\
    &\quad-2\sigma\left(Ei\left(-2\exp\left(\frac{q-\mu}{\sigma}\right)\right)-Ei\left(-\exp\left(\frac{q-\mu}{\sigma}\right)\right)\right).
    \end{aligned}
\end{equation*}
    
\subsection{Scoring rules for the GEV distribution with non-zero shape parameter}

When $\gamma \neq 0$, the GEV distribution has CDF $F(x) = \exp\left[-\left(1+\frac{\gamma}{\sigma}(x-\mu)\right)^{-1/\gamma}\right]$ with inverse $F^{-1}(x) = \mu -\frac{\sigma}{\gamma}\left(1-(-\ln x)^{-\gamma}\right)$. Therefore,
\begin{equation*}
\begin{aligned}
    wCRPS(F,y)&=&&\left(q-\mu+\frac{\sigma}{\gamma}\right)F(q)^2-((y\vee q)-\mu+\frac{\sigma}{\gamma})(1-2F(y\vee q))\\
    &&&-\frac{\sigma}{\gamma}\left[2^\gamma\Gamma_l(1-\gamma,-2\ln(F(q)))-2\Gamma_l(1-\gamma,-\ln(F(y\vee q)))\right],
\end{aligned}
\end{equation*}

and
\begin{equation*}
\begin{aligned}
    E_{\p\p}[g_w(X,X')]&=&&-2\left(q-\mu+\frac{\sigma}{\gamma}\right)F(q)(1-F(q))\\
    &&&+2\frac{\sigma}{\gamma}\left[2^{\gamma}\Gamma_l(1-\gamma,-2\ln(F(q)))-2\Gamma_l(1-\gamma,-\ln(F(q)))\right],
\end{aligned}
\end{equation*}

where $\Gamma_l(a,\tau)=\int_0^\tau t^{a-1}e^{-t}dt$.

\section{Proofs}\label{sec:appendix:proofs}

\begin{proof}[Proof of Proposition \ref{prop:wcrps:kernelscore}]
Let $g_w(x,y):=|\int_{y}^x w(t)dt|$, where $w(x)$ is a non-negative function. Then $g_w:\Omega\times\Omega\to\R$ is a symmetric real-valued function. Moreover, 

\begin{equation*}
    \begin{aligned}
        \sum_{i=1}^n\sum_{j=1}^n a_ia_jg_w(x_i,x_j)\leq 0
    \end{aligned}
\end{equation*}
for all $n$, all $a_1,...,a_n\in\R$ s.t. $\sum_{i=1}^n a_i=0$ and all $x_1,...,x_n\in\Omega$, since $g(x,y)=|x-y|$ is negative definite. Hence, the kernel $g_w$ is negative definite and the weighted CRPS,
\begin{equation*}
    wCRPS(\p,y)=\frac{1}{2}E_{\p,\p}[g_w(X,X')]-E_{\p}[g_w(X,y)],
\end{equation*}
is indeed a kernel score. 
\end{proof}

\begin{proof}[Proof of Proposition \ref{prop:wCRPS:scale:function}]

Note that for any $\Q$ for which Assumption~\ref{assumption1} holds, the assumption also holds for $\Q^{w_u}$, where $u\in \text{supp}(\Q)$, and $w_u$ is the indicator weight function.\\

We start proving (i). Assume that $\Q_{\thetab}$ is a probability measure satisfying Assumption \ref{assumption1}. Let $S(\Q_{\thetab},y)=\frac{1}{2}E_{\Q_{\thetab},\Q_{\thetab}}[g_w(X,Y)]-E_{\Q_{\thetab}}[g_w(X,y)]$ denote the wCRPS where $q_\theta$ is the density of $\Q_{\thetab}$. 
By Tailor expansion around $\thetab$, we have that
\begin{equation}
    S(\Q_{\thetab+t\sigma \rb},\Q_{\thetab})=S(\Q_{\thetab},\Q_{\thetab})+t\sigma \rb^T\nabla_{\thetab} S(\Q_{\thetab},\Q_{\thetab})+\frac{1}{2}t^2\sigma^2\rb^T\nabla_{\thetab}^2S(\Q_{\thetab},\Q_{\thetab})\rb+o(t^2).  \label{eq:proof:taylor}
\end{equation}

From Assumption \ref{assumption1}, $s(\theta)=\nabla_\theta^2S(\Q_{\thetab},\Q_{\thetab})|_{\Q=\Q_{\thetab}}$ exists and is continuous and we have $\nabla_\theta S(\Q_{\thetab},\Q_{\thetab})=0$ since $S$ is proper. Hence,
\begin{equation*}
\begin{aligned}
    S(\Q_{\thetab},\Q_{\thetab})-S(\Q_{\thetab+t\sigma \rb},\Q_{\thetab})&=-\frac{1}{2}t^2\sigma^2\rb^T\nabla_{\thetab}^2S(\Q_{\thetab},\Q_{\thetab})\rb+o(t^2)\\
    &=\frac{1}{4}t^2\sigma^2\rb^T\nabla_{\thetab}^2E_{\Q_{\thetab},\Q_{\thetab}}[g_w(X,Y)]\rb+o(t^2). \label{eq:proof:taylor2}
\end{aligned}
\end{equation*}
We can now follow the steps in the proof of Lemma 1 in \citet{Bolin2022LocalRulesCustom}, where we replace their kernel $g_c(x,y)$ with our weighted kernel $g_w(x,y)$, since the only thing required in the proof is that the kernel is a positive negative-definite kernel and that $g_w(x,y)\leq g(x,y)$, both of which hold whenever $|w(t)|\leq 1$ for all $t\in\R$. This holds certainly for our indicator weight function $w_q$ but even for other choices of $w$. 
This results in
\begin{equation}
  \nabla_{\thetab}E_{\Q_{\thetab},\Q}[g_w(X,Y)]=-\sigma^{-1}E_{\Q,\Q}[g_w(\sigma X+\mu, Y)v(X)],\label{eq:proof:nabla} 
\end{equation}
and
\begin{equation}
  \nabla_{\thetab}^2E_{\Q_{\thetab},\Q_{\thetab}}[g_w(X,Y)]=\sigma^{-2}E_{\Q,\Q}[g_w(\sigma X+\mu,\sigma Y+\mu)H(X,Y)],  \label{eq:proof:nabla:2}
\end{equation}
where $v(X)$ is a vector and $H(X,Y)$ is a $2\times 2$ matrix, both independent of $\thetab$, and
\begin{equation*}
\begin{aligned}
    g_w(\sigma x+\mu,\sigma y+\mu)&=\left|\int_{\sigma y+\mu}^{\sigma x+\mu} w(t)dt\right|
    =\sigma\left|\int_{y}^{x} w(\sigma \tau+\mu)d\tau\right|.
\end{aligned}
\end{equation*}
For $w=w_q$, we have
\begin{equation}\label{eq:swcrps:g:cases}
\begin{aligned}
    g_w(\sigma x+\mu,\sigma y+\mu)&=\sigma\left|\int_{y}^{x} 1\left\{\tau\geq \frac{q-\mu}{\sigma}\right\}d\tau\right|=
    \begin{cases}
    \sigma|x-y| & \text{if } x,y\geq \frac{q-\mu}{\sigma},\\
    \sigma|y-\frac{q-\mu}{\sigma}| & \text{if } y\geq \frac{q-\mu}{\sigma}\geq x,\\
    \sigma|x-\frac{q-\mu}{\sigma}| & \text{if } x\geq \frac{q-\mu}{\sigma}\geq y,\\
    0& \text{otherwise.}
    \end{cases}
\end{aligned}
\end{equation}
Since we cannot let $w$ depend on $\thetab=(\mu,\sigma)$, we cannot choose any $w$ such that 
\begin{equation}
g_w(\sigma x+\mu,\sigma y+\mu)=\sigma g_w(x, y).\label{eq:gwgw}
\end{equation}
However, letting $k=\frac{q-\mu}{\sigma}$, we have that 
$g_w(\sigma x+\mu,\sigma y+\mu)=\sigma|x-y|$
for all $x,y\geq k$.

Then the wCRPS has scale function
\begin{equation}
    s(\Q_{\thetab})=\sigma^{-1} E_{\Q,\Q}[H_\Q(X,Y)h(X,Y)],
\end{equation}
where $H_\Q(X,Y)$ is a $2\times2$ matrix independent of $\theta$ and $h(X,Y)$ is a function dependent on $\thetab$.
Furthermore, there exists a constant $k<\infty$, such that $h(X,Y)$ is independent of $\theta$ for all $X,Y\geq k$.
From this scale function we directly see that the wCRPS is neither locally scale invariant nor locally tail-scale invariant. 

We now prove (ii).
Let $S$ be the swCRPS defined in Eq.~\eqref{eq:score:swcrps}. As in the proof of statement (i), we perform the Taylor expansion shown in Eq.~\eqref{eq:proof:taylor}.  Since $S$ is proper, $\nabla_\theta S(\Q_{\thetab},\Q_{\thetab})=0$ and we only need to consider the term $\nabla_\theta^2S(\Q_{\thetab},\Q)|_{\Q=\Q_{\thetab}}$ that exists and is continuous. 

For simplified notation, let $E_{P,Q}=E_{\p_{\thetab},\Q}[g(X,Y)]$, $E_{\dot{P},Q}=\nabla_{\thetab}E_{\p_{\thetab},\Q}[g(X,Y)]$, $E_{Q,\dot{P}}=\nabla_{\thetab}E_{\Q,\p_{\thetab}}[g(X,Y)]$, and $E_{\ddot{P},Q}=\nabla^2_{\thetab}E_{\p_{\thetab},Q}[g(X,Y)]$. One can show that

\begin{equation*}
\begin{aligned}
    \nabla_\theta^2\log(E_{P,P})=&-\frac{2E_{\dot{P},P}E_{\dot{P},P}^T}{E_{P,P}^2}+\frac{E_{\dot{P},\dot{P}}}{E_{P,P}}+\frac{E_{\ddot{P},P}}{E_{P,P}},\\
    \nabla_\theta^2\frac{E_{P,Q}}{E_{P,P}}=&-\frac{2E_{\dot{P},Q}E_{\dot{P},P}^T}{E_{P,P}^2}-\frac{2E_{P,Q}E_{\dot{P},P}^T}{E_{P,P}^2}-\frac{2E_{P,Q}E_{\dot{P},\dot{P}}^T}{E_{P,P}^2}\\
    &+\frac{2^3E_{\dot{P},P}E_{\dot{P},P}^T}{E_{P,P}^3}+\frac{E_{\ddot{P},P}}{E_{P,P}}-\frac{2E_{P,Q}E_{\ddot{P},P}^T}{E_{P,P}^2}.
\end{aligned}
\end{equation*}
Evaluating these together at $Q=P$ yields
\begin{equation*}
    \nabla_\theta^2S(\p_{\thetab},\Q)|_{\Q=\p_{\thetab}} = \frac{1}{E_{P,P}}E_{\dot{P},\dot{P}}-\frac{2}{E^2_{P,P}}E_{\dot{P},P}E_{P,\dot{P}}^T.
\end{equation*}
Inserting \eqref{eq:proof:nabla} and \eqref{eq:proof:nabla:2} instead of $E_{\dot{P},P}$ and $E_{\dot{P},\dot{P}}$ into the equation above results in 
\begin{equation*}
\begin{aligned}
    s(\Q_{\thetab}) =& -\frac{1}{2}\nabla_\theta^2S(\Q_{\thetab},\Q)|_{\Q=\Q_{\thetab}}\\
    =&-\frac{\sigma^{-2}}{2}\left(\frac{E_{\Q,\Q}[g_w(\sigma X+\mu,\sigma Y+\mu)H(X,Y)]}{E_{\Q,\Q}[g_w(\sigma X+\mu,\sigma Y+\mu)]}\right.\\
    &\left.-\frac{2E_{\Q,\Q}[g_w(\sigma X+\mu, \sigma Y+\mu)v(X)]E_{\Q,\Q}[g_w(\sigma X+\mu, \sigma Y+\mu)v(X)]^T}{E_{\Q,\Q}[g_w(\sigma X+\mu,\sigma Y+\mu)]^2}\right).
    \end{aligned}
\end{equation*}
and for $u\geq k$,
\begin{equation*}
    s(\Q_{\thetab}^{w_u}) = -\frac{\sigma^{-2}}{2}\left(\frac{\sigma E_{\Q^{w_u},\Q^{w_u}}[|X-Y|H(X,Y)]}{\sigma E_{\Q^{w_u},\Q^{w_u}}[|X-Y|]}-\frac{2\sigma^2E_{\Q^{w_u},\Q^{w_u}}[|X-Y|v(X)]E_{\Q^{w_u},\Q^{w_u}}[|X-Y|v(X)]^T}{\sigma^2 E_{\Q^{w_u},\Q^{w_u}}[|X-Y|]^2}\right),
\end{equation*}
i.e.
\begin{equation}
    s(\Q_{\thetab}^{w_u})=s(\Q_{(0,1)}^{w_u})
\end{equation}
for $u\geq k$, showing that the swCRPS is locally tail-scale invariant.

\end{proof}

\begin{proof}[Proof of Proposition \ref{prop:rCRPS}]
\begin{enumerate}[(i)]
    \item According to \citet{Bolin2022LocalRulesCustom}, the scale function of rCRPS with kernel function 
\begin{equation}\label{eq:rcprs:g}
    g_c(x,y)=\begin{cases}|x-y|,&|x-y|<c\\0,&\text{otherwise}\end{cases}
\end{equation}
is 
\[s(\Q_{\theta}) = \sigma^{-1}E_{\Q,\Q}\left[H_{\Q}(X,Y)1\left(|X-Y|<\frac{c}{\sigma}\right)\right],\]
where $H_{\Q}(X,Y)$ is a $2\times2$ positive semidefinite matrix independent of $\theta$.
Choose $u=\frac{c}{2\sigma}$. Then, for $|x|<u, |y|<u$,
\[|x-y|\leq|x|+|y|<\frac{c}{2\sigma},\]
and
\[s(\Q_{\theta}^{w^u}) = \sigma^{-1}E_{\Q^{w^u},\Q^{w^u}}\left[H_{\Q^{w^u}}(X,Y)\right].\]
Therefore, rCRPS is not locally weight-scale invariant with respect to $w^u(x)=1\{|x|<u\}$
\item Following the proof of Proposition \ref{prop:wCRPS:scale:function}, using the fact of Eq. \eqref{eq:rcprs:g} instead of Eq. \eqref{eq:swcrps:g:cases}, yields that the rSCRPS is locally weight-scale invariant for weight function $w(x)=1\{|x|<c/2\}$.
\end{enumerate}
\end{proof}

\begin{proof}[Proof of Proposition \ref{prop:lsq}]

For $u\geq q$, the censored likelihood score, $LS_q$, and the logarithmic score, $LS$, have the same conditional expectation, since
\begin{equation*}
\begin{aligned}
S_u(\p,\Q) &= E_\Q[LS_q(\p_u,Y)|Y>u]
=E_\Q[1\{Y\leq q\}\log(F(q))+1\{Y > q\}\log(f(Y))|Y>u]\\
&=E_\Q[\log(f(Y))|Y>u]
=E_\Q[LS(\p_u,Y)|Y>u].
\end{aligned}
\end{equation*}
The logarithmic score is locally scale invariant~\citep{Bolin2022LocalRulesCustom}, and thus also locally tail-scale invariant according to Remark \ref{remark:tail:scale}. Therefore, the conditional likelihood is also locally tail-scale invariant.
\end{proof}

\begin{proof}[Proof of Proposition \ref{prop:general_cencoredlikelohood}]
First note that
\begin{equation}
    \begin{aligned}
        \text{CLog}_w(\p,y)&=w(y)\log\left(\int_\R p(t)w(t)dt\right)+(1-w(y))\log\left(1-\int_\R p(t)w(t)dt\right)+w(y)\log(p_w(y))\\
        &=w(y)\log\left(p(y)\right)+(1-w(y))\log\left(1-\int_\R p(t)w(t)dt\right)
    \end{aligned}
\end{equation}
Since conditioned on $\tilde{w}(y)=1\{w(y)==1\}$, the score becomes the scale invariant logarithmic score, the $\text{CLog}_w$ is locally weight-scale invariant with respect to $\tilde{w}$.
\end{proof}

\begin{proof}[Proof of Proposition \ref{prop:combined-scores}]
\begin{enumerate}[(i)]
    \item Consider the score
    \begin{equation*}
        owS(\p,y;w)=w(y)S_0(\p^w,y)
    \end{equation*}
    Note that 
    \begin{equation*}
        \begin{aligned}
    owS(\p^w,\Q^w;w)=E[w(Y)S_0(\p^w,Y)|w(Y)>0]=E[S_0(\p^w,Y)|w(Y)>0]=S_0(\p^w,\Q^w)
        \end{aligned}
    \end{equation*}
    so if $S_0$ is locally weight-scale invariant with respect to $w$, then $owS$ is also locally weight-scale invariant with respect to $w$.
    
    \item Consider the score $S=S_1+S_2$. The weight-scale function of $S$ is the sum of the weight-scale functions of $S_1$ and $S_2$. If $S_1$ and $S_2$ locally weight-scale invariant with respect to $w$, their weight-scale functions fulfil the requirements for local scale invariance, and hence, the weight-scale function of $S$ does too. Therefore, $S$ is locally weight-scale invariant.
\end{enumerate}
\end{proof}

\section{Estimated flood densities}\label{sec:app:flood:pdf}

\begin{figure}[t]
    \centering
    \includegraphics[width=0.95\textwidth]{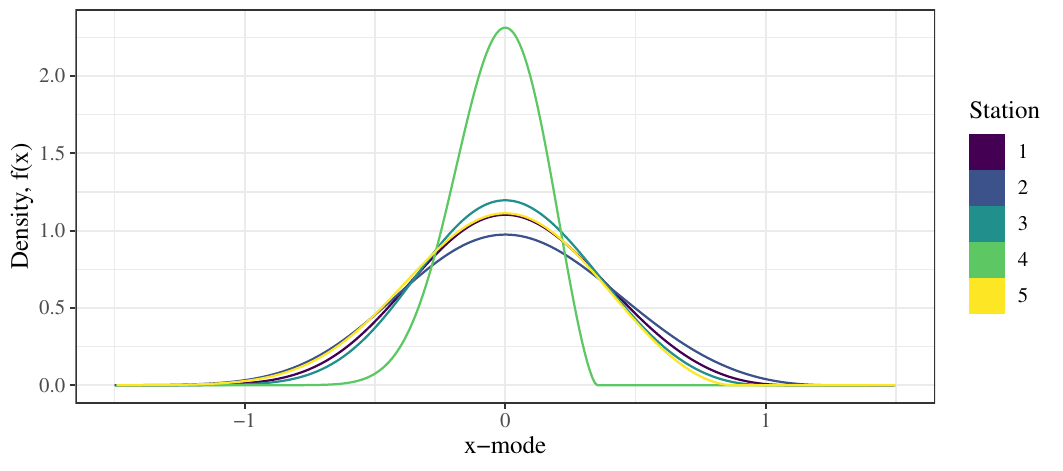}
    \caption{Density of estimated GEV distributions for flood stations as described in Table \ref{tab:great:lakes:gev:fit}, after shifting the x-axis by the estimated mode of each station.}
    \label{fig:pdf:gev:lakes}
\end{figure}

The annual maximum water levels were modelled with a stationary GEV distribution and the estimated parameters shown in Table \ref{tab:great:lakes:gev:fit}. Stations Michigan Huron (station 2) and Lake Superior (station 4) were used for simulations since those stations suggested stationarity and since the scale parameter of Lake Michigan Huron was almost twice the scale parameter of Lake Superior. All stations had negative shape parameter. In Figure \ref{fig:pdf:gev:lakes} the estimated density function is compared at all five stations, where the x-axis has been shifted by each respective mode for readability.


\bibliography{bibfin}

\end{document}